\newcommand{\nParticles}{N} 
\newcommand{\normConst}{\mathcal{Z}} 
\DeclareMathOperator{\probMeasSet}{\mathcal{P}}%
\DeclareMathOperator{\measSet}{\mathcal{M}}%
\DeclareMathOperator{\boundedFunSet}{\calB}%
\DeclareMathOperator{\osc}{osc}%
\newcommand{\bijection}{\theta}%
\newcommand{\martingaleDiffSingle}{\varDelta U}%
\newcommand{\martingaleSingle}{U}%
\newcommand{\remainderSingle}{L}%
\newcommand{\localErrorSingle}{V}%
\newcommand{\martingale}{\mathcal{U}}%
\newcommand{\remainder}{\mathcal{L}}%
\newcommand{\localError}{\mathcal{V}}%
\newcommand{\martingaleAlt}{\widetilde{\mathcal{U}}}%
\newcommand{\testFun}{f}
\newcommand{\testFunAlt}{g}
\newcommand{\Filt}{\pi}
\newcommand{\Pred}{\tilde{\pi}}
\newcommand{\Trans}{L}
\newcommand{\obs}{g}
\newcommand{\pathObs}{\mathbf{y}} 
\newcommand{\PathObs}{\mathbf{Y}} 
\newcommand{\marginalLikelihood}{\mathcal{L}}
\newcommand{\spaceObs}{F}%
\newcommand{\sigFieldState}{\mathcal{E}}%
\newcommand{\sigFieldObs}{\mathcal{F}}
\newcommand{\sigFieldPath}[1]{\bm{\mathcal{E}}_{\!{#1}}}%
\newcommand{\spaceGeneric}{H}%
\newcommand{\sigFieldGeneric}{\mathcal{H}}%
\newcommand{\spaceGenericAlt}{{H^\prime}}%
\newcommand{\sigFieldGenericAlt}{{\mathcal{H}^\prime}}%
\newcommand{\state}{x} 
\newcommand{\State}{X} 
\newcommand{\path}{\mathbf{x}} 
\newcommand{\Path}{\mathbf{X}}
\newcommand{\stateAlt}{z}
\newcommand{\pathAlt}{\mathbf{z}}
\newcommand{\spaceState}{E} 
\newcommand{\spacePath}{\mathbf{E}}
\newcommand{\Target}{\eta} 
\newcommand{\uTarget}{\gamma} 
\newcommand{\ProposalKernel}{R}
\newcommand{\particlePath}[2]{\bm{\xi}_{{#1}}^{#2}}
\newcommand{\ParticlePath}[2]{\bm{\xi}_{{#1}}^{#2}}
\newcommand{\ParticlePathStationary}[2]{\tilde{\bm{\xi}}_{{#1}}^{#2}}
\newcommand{\Mutation}{M}
\newcommand{\Potential}{G}
\newcommand{\PotentialProposal}{F}
\newcommand{\boundResolvent}[1]{\widebar{T}_{#1}}
\newcommand{\boundResolventUniform}{\widebar{T}}
\newcommand{\boundTime}[1]{a_{#1}}
\newcommand{\boundExponent}[1]{b_{#1}}
\newcommand{\boundIntegralOperatorLipschitz}[1]{\widebar{\varGamma}_{#1}}
\newcommand{\boundIntegralOperatorLipschitzAlt}[1]{\widebar{\varUpsilon}_{#1}}
\newcommand{\asymptoticVarianceBPF}[1]{\varSigma_{#1}^{\text{BPF}}}
\newcommand{\asymptoticVarianceMCMCBPF}[1]{\varSigma_{#1}^{\text{MCMC-BPF}}}
\newcommand{\asymptoticVarianceFAAPF}[1]{\varSigma_{#1}^{\text{FA-APF}}}
\newcommand{\asymptoticVarianceMCMCFAAPF}[1]{\varSigma_{#1}^{\text{MCMC-FA-APF}}}
\newcommand{\iact}{\mathrm{iact}}
\newcommand{\mcmcKern}[2]{K_{#1}^{#2}}
\newcommand{\mcmcTarget}[2]{\Phi_{#1}^{#2}}
\newcommand{\mcmcInit}[2]{\kappa_{#1}^{#2}}
\newcommand{\mcmcProposal}[2]{\Pi_{#1}^{#2}}
\newcommand{\integralOperatorLipschitz}[2]{\varGamma_{#1}^{#2}}
\newcommand{\integralOperatorLipschitzAlt}[2]{\varUpsilon_{#1}^{#2}}
\newcommand{\resolv}[2]{T_{#1}^{#2}}
\newcommand{\covarianceFunction}[2]{C_{#1}^{#2}}
\newcommand{\covarianceFunctionAlt}[2]{\mathcal{C}_{#1}^{#2}}
\DeclareMathOperator{\dN}{N} 
\newcommand{\diff}{\mathrm{d}} 
\DeclareMathOperator{\Prob}{\mathbb{P}} 
\DeclareMathOperator{\E}{\mathbb{E}} 
\DeclareMathOperator{\var}{var} 
\DeclareMathOperator{\ind}{\mathbf{1}} 
\newcommand{\ccdot}{\,\cdot\,} 
\newcommand{\ProbSymbol}{\mathbb{P}}%
\DeclareMathOperator{\cov}{cov}%
\DeclareMathOperator{\Id}{Id}%
\newcommand{\unitFun}{\mathbf{1}}%
\newcommand{\convergesInDistribution}{\ensuremath{\mathrel{\mathop{\rightarrow}\nolimits_{\text{\upshape{d}}}}}}
\newcommand{\convergesAlmostSurely}{\ensuremath{\mathrel{\mathop{\rightarrow}\nolimits_{\text{\upshape{a.s.}}}}}}
\newcommand{\EG}{e.g.\@}
\newcommand{\IE}{i.e.\@}
\newcommand{\RHS}{r.h.s.\@}
\newcommand{\calA}{\mathcal{A}}%
\newcommand{\calB}{\mathcal{B}}%
\newcommand{\calF}{\mathcal{F}}%
\newcommand{\calG}{\mathcal{G}}%
\newcommand{\calL}{\mathcal{L}}%
\newcommand{\calR}{\mathcal{R}}%
\newcommand{\reals}{\mathbb{R}}%
\newcommand{\naturals}{\mathbb{N}}%
\DeclareFontFamily{U}{mathx}{\hyphenchar\font45}
\DeclareFontShape{U}{mathx}{m}{n}{<-> mathx10}{}
\DeclareSymbolFont{mathx}{U}{mathx}{m}{n}
\DeclareMathAccent{\widebar}{0}{mathx}{"73}
\newacronym{PMCMC}{PMCMC}{particle Markov chain Monte Carlo}%
\newacronym{EMCMC}{EMCMC}{ensemble Markov chain Monte Carlo}%
\newacronym{APF}{APF}{auxiliary particle filter}%
\newacronym{PF}{PF}{particle filter}%
\newacronym{FAAPF}{FA-APF}{fully-adapted auxiliary particle filter}%
\newacronym{BPF}{BPF}{bootstrap particle filter}%
\newacronym{EHMM}{EHMM}{embedded hidden Markov models}%
\newacronym{HMM}{HMM}{hidden Markov model}%
\newacronym{CPF}{CPF}{conditional particle filter}%
\newacronym{CPFBS}{CPF-BS}{conditional particle filter with backward sampling}%
\newacronym{CPFAS}{CPF-AS}{conditional particle filter with backward sampling}%
\newacronym{PG}{PG}{particle Gibbs}%
\newacronym{PGBS}{PG-BS}{particle Gibbs sampler with backward sampling}%
\newacronym{PGAS}{PG-AS}{particle Gibbs sampler with backward sampling}%
\newacronym{SQMC}{SQMC}{sequential quasi Monte Carlo}%
\newacronym{RQMC}{RQMC}{randomised quasi Monte Carlo}%
\newacronym[user1={ancestor-sampling}]{AS}{AS}{ancestor sampling}%
\newacronym[user1={backward-sampling}]{BS}{BS}{backward sampling}%
\newacronym{PDF}{PDF}{probability density function}%
\newacronym{IID}{IID}{independent and identically distributed}%
\newacronym{MCMC}{MCMC}{Markov chain Monte Carlo}%
\newacronym{MH}{MH}{Metropolis--Hastings}%
\newacronym{ESS}{ESS}{effective sample size}%
\newacronym{PMMH}{PMMH}{particle marginal Metro\-po\-lis--Has\-tings}%
\newacronym{MCWM}{MCWM}{Monte Carlo within Metropolis}%
\newacronym{CDF}{CDF}{cumulative distribution function}%
\newacronym{SMC}{SMC}{sequential Monte Carlo}%
\newacronym{CSMC}{CSMC}{conditional sequential Monte Carlo}%
\newacronym[\glslongpluralkey={MCMC particle filters}]{MCMCPF}{MCMC-PF}{MCMC particle filter}%
\newacronym[\glslongpluralkey={MCMC bootstrap particle filters}]{MCMCBPF}{MCMC-BPF}{MCMC bootstrap particle filter}%
\newacronym[\glslongpluralkey={MCMC fully-adapted particle filters}]{MCMCFAAPF}{MCMC-FA-APF}{MCMC fully-adapted auxiliary particle filter}%
\newacronym{MCMCAPF}{MCMC-APF}{MCMC auxiliary particle filters}%
\newacronym{FAAPFalt}{FA-APF}{fully-adapted auxiliary PF}%
\newacronym{BPFalt}{BPF}{bootstrap PF}%
\newacronym{EPSRC}{EPSRC}{Engineering and Physical Sciences Research Council}%
\newacronym{LW}{LW}{Liu~\&~West}%
\newacronym{PL}{PL}{particle learning}%
\newacronym{FF}{FF}{fertility factor}%
\newacronym{CLT}{CLT}{central limit theorem}%
\newacronym{WLLN}{WLLN}{weak law of large numbers}%
\newacronym{SLLN}{SLLN}{strong law of large numbers}%
\newacronym{IACT}{IACT}{integrated autocorrelation time}%
\newacronym{HMC}{HMC}{Hamiltonian Monte Carlo}%
\newacronym{MALA}{MALA}{Metropolis-adjusted Langevin algorithm}%
\theoremstyle{plain}%
\newtheorem{proposition}{Proposition}%
\newtheorem{lemma}{Lemma}%
\newtheorem{corollary}{Corollary}%
\newtheorem{example}{Example}%
\theoremstyle{break}
\newmdtheoremenv[
 ntheorem=true,
 skipbelow = .6\baselineskip plus 1ex minus 1ex,
 skipabove = .6\baselineskip plus 1ex minus 1ex,
 innerleftmargin = 0pt,
 innerrightmargin = 0pt,
 leftline = false,
 rightline = false,
 needspace = 5ex 
]{framedAlgorithm}[theorem]{Algorithm}
\theoremstyle{empty}%
\theoremstyle{nonumberplain}%
\newtheorem{proof}{Proof}%
\title{Limit theorems for\\sequential MCMC methods}
\author{Axel Finke\\Department of Statistics and Applied Probability,\\National University of Singapore, Singapore\\[0.3cm] Arnaud Doucet\\Department of Statistics, Oxford University, UK\\[0.3cm]Adam M.~Johansen\\Department of Statistics, University of Warwick, UK\\and The Alan Turing Institute, London, UK}
\date{\today}
\begin{document}

\maketitle

\enlargethispage{\baselineskip}

\begin{abstract}
  \noindent{}\Gls{SMC} methods, also known as \glspl{PF}, constitute a class of algorithms used to approximate expectations with respect to a sequence of probability distributions as well as the normalising constants of those distributions. Sequential \gls{MCMC} methods are an alternative class of techniques addressing similar problems in which particles are sampled according to an \gls{MCMC} kernel rather than conditionally independently at each time step. These methods were introduced over twenty years ago by \citet{berzuini1997dynamic}. Recently, there has been a renewed interest in such algorithms as they demonstrate an empirical performance superior to that of \gls{SMC} methods in some applications. We establish a \glsdesc{SLLN} and a \glsdesc{CLT} for sequential \gls{MCMC} methods and provide conditions under which errors can be controlled uniformly in time. In the context of state-space models, we provide conditions under which sequential \gls{MCMC} methods can indeed outperform standard \gls{SMC} methods in terms of asymptotic variance of the corresponding Monte Carlo estimators.

\medskip\noindent\textbf{Keywords:} almost sure convergence; central limit theorem; $L_p$-error bounds; particle filters; strong consistency; time-uniform convergence
\end{abstract}

\section{Introduction}
\glsreset{MCMC}
\glsreset{SMC}
\Gls{SMC} algorithms are used to approximate expectations with respect to a sequence of probability measures as well as the normalizing constants of those measures. These techniques have found numerous applications in statistics, signal processing, physics and related fields \citep[see, \EG,][for a recent review]{kunsch2013particle}. These algorithms proceed in a sequential manner by generating a collection of $N$ conditionally independent particles at each time step. An alternative to these schemes in which the particles at each time step are sampled instead according to a single \gls{MCMC} chain was proposed early on by  \citet{berzuini1997dynamic}. Over recent years, there has been a renewed interest in such ideas as there is empirical evidence that these methods can outperform standard \gls{SMC} algorithms in interesting scenarios \citep[see, \EG,][for novel applications and extensions]{carmi2012gaussian,golightly2006bayesian,septier2009mcmc,septier2016langevin,pal2018sequential}. These methods have been termed \emph{sequential \gls{MCMC}} methods in the literature. However, in this work, we will also refer to these as \emph{\glspl{MCMCPF}}, to convey the idea that they rely on the same importance-sampling construction as particle methods.


Although there is a wealth of theoretical results available for \gls{SMC} algorithms -- see, for example, \citet{delmoral2004feynman} -- to the best of our knowledge, no convergence guarantees have yet been provided for \glspl{MCMCPF}. The present work fills this gap by providing a \glsdesc{SLLN} and a \glsdesc{CLT} for the Monte Carlo estimators of expectations and normalising constants obtained through \glspl{MCMCPF}.
Our results show that compared to conventional \glspl{PF}, the asymptotic variance of estimators obtained by \glspl{MCMCPF} includes additional terms which can be identified as the excess variance arising from the autocorrelation of the \gls{MCMC} chains used to generate the particles. This implies that a standard \gls{PF} always provides estimators with a lower asymptotic variance than the corresponding \gls{MCMCPF} if both algorithms target the same distributions and if the latter relies on positive \gls{MCMC} kernels.

However, \glspl{MCMCPF} exhibit a significant advantage over regular \glspl{PF}. The popular \gls{FAAPF} introduced by \citet{pitt1999filtering} typically significantly outperforms the \gls{BPF} of \citet{gordon1993novel}, for example when approximating the optimal filter for state-space models in the presence of informative measurements. Unfortunately, the \gls{FAAPF} is implementable for only a very restricted class of models whereas the \gls{MCMCPF} version of the \gls{FAAPF} is much more widely applicable. In scenarios in which the \gls{FAAPF} is not implementable, but its \gls{MCMCPF} version is, and in which the \gls{MCMC} kernels used by the latter are sufficiently rapidly mixing, the \gls{MCMCPF} can substantially outperform implementable but rather inefficient standard \glspl{PF} such as the \gls{BPF}.

\section{MCMC-PFs}

\subsection{Notation}

Let $(\Omega, \calA, \ProbSymbol)$ be some probability space and denote expectation with respect to $\ProbSymbol$ by $\E$. For some set measurable space $(\spaceGeneric, \sigFieldGeneric)$, we let $\boundedFunSet(\spaceGeneric)$ denote the Banach space of all of bounded, real-valued, $\sigFieldGeneric$-measurable functions on $\spaceGeneric$, equipped with the uniform norm $\lVert \testFun \rVert \coloneqq \sup_{x \in \spaceGeneric} \lvert \testFun(x)\rvert$. We also endow this space with the Borel $\sigma$-algebra (with respect to $\lVert \ccdot \rVert$), and the product spaces $\boundedFunSet(\spaceGeneric) \times \boundedFunSet(\spaceGeneric)$ and $\boundedFunSet(\spaceGeneric)^d$ for $d \in \naturals$ with the associated product $\sigma$-algebras. We also define the subsets $\boundedFunSet_1(\spaceGeneric) \coloneqq \{\testFun \in \boundedFunSet(\spaceGeneric) \mid \lVert \testFun \rVert \leq 1\}$.
Furthermore, for any $\testFun \in \boundedFunSet(\spaceGeneric)$, $\osc(\testFun) \coloneqq \sup_{(x,y) \in \spaceGeneric^2} \lvert \testFun(x) - \testFun(y) \rvert$. Finally, we let $\unitFun \in \boundedFunSet(\spaceGeneric)$ denote the unit function on $\spaceGeneric$, \IE{} $\unitFun \equiv 1$.

Let $\measSet(\spaceGeneric)$ denote the Banach space of all finite and signed measures on $(\spaceGeneric, \sigFieldGeneric)$ equipped with the total variation norm $\lVert \mu \rVert \coloneqq \frac{1}{2}  \sup_{\testFun \in \boundedFunSet_1(\spaceGeneric)} \lvert \mu(\testFun) \rvert$, where $\mu(\testFun) \coloneqq \int_{\spaceGeneric} \testFun(x) \mu(\diff x)$, for any $\mu \in \measSet(\spaceGeneric)$ and any $\testFun \in \boundedFunSet(\spaceGeneric)$. We define $\probMeasSet(\spaceGeneric) \subseteq \measSet(\spaceGeneric)$ to be the set of all probability measures on $(\spaceGeneric, \sigFieldGeneric)$. 

Let $(\spaceGenericAlt, \sigFieldGenericAlt)$ be another measurable space. For bounded integral operators $M\colon \boundedFunSet(\spaceGenericAlt) \to \boundedFunSet(\spaceGeneric)$, $\testFun \mapsto M(\testFun)(x) \coloneqq \int_\spaceGenericAlt \testFun(z) M(x, \diff z)$ for any $x \in \spaceGeneric$, we define $\smash{[\mu \otimes M](\testFun) =\allowbreak \int_{\spaceGeneric \times \spaceGenericAlt} \mu(\diff x) M(x,\diff y) \testFun(x,y)}$  for any $\mu$ in $\probMeasSet(\spaceGeneric)$ and $\testFun \in \boundedFunSet(\spaceGeneric)\times\boundedFunSet(\spaceGenericAlt)$. We also define the operator norm $\lVert M \rVert \coloneqq
\sup_{\testFun \in \boundedFunSet_1(\spaceGenericAlt)} \lVert M(\testFun)\rVert$ as well as the \emph{Dobrushin coefficient}:
$\beta(M) \coloneqq \sup_{(x,y) \in \spaceGenericAlt \times \spaceGenericAlt} \lVert M(x,\ccdot) - M(y, \ccdot)\rVert.$

Finally, ``$\convergesAlmostSurely$'' denotes almost sure convergence with respect to $\ProbSymbol$ 
and ``$\convergesInDistribution$'' denotes convergence in distribution.

\subsection{Path-space Feynman--Kac model}

We want to approximate expectations under some distributions which are related to a distribution flow $(\Target_n)_{n \geq 1}$ on spaces $(\spacePath_n, \sigFieldPath{n})$ -- with $(\spacePath_1, \sigFieldPath{1}) \coloneqq (\spaceState, \sigFieldState)$ and $(\spacePath_n, \sigFieldPath{n}) \coloneqq (\spacePath_{n-1} \times \spaceState, \sigFieldPath{n-1} \otimes \sigFieldState)$, for $n > 1$ -- of increasing dimension, where
\begin{equation}
 \Target_n(\diff \path_n) \coloneqq \frac{\uTarget_n(\diff \path_n)}{\normConst_n} \in \probMeasSet(\spacePath_n),
\end{equation}
for some positive finite measure $\uTarget_n$ on $(\spacePath_n, \sigFieldPath{n})$ and typically unknown normalising constant $\normConst_n \coloneqq \uTarget_n(\unitFun)$. Throughout this work, we write $\path_p \coloneqq \state_{1:p} = (\path_{p-1}, \state_p)$ and $\pathAlt_p \coloneqq \stateAlt_{1:p} = (\pathAlt_{p-1}, \stateAlt_p)$.

We assume that the target distributions are induced by a Feynman--Kac model on the path space \citep{delmoral2004feynman}. That is, there exists an initial distribution $\Mutation_1 \coloneqq \Target_1 \in \probMeasSet(\spacePath_1)$, a sequence of Markov transition kernels $\Mutation_n\colon \spacePath_{n-1} \times \sigFieldState \to [0,1]$ for $n>1$ and a sequence of bounded (without loss of generality we take the bound to be 1) measurable potential functions $\Potential_n\colon \spacePath_n \to (0,1]$, for $n\geq1$, such that for any $\testFun_n \in \boundedFunSet(\spacePath_n)$,
\begin{align}
 \uTarget_n(\testFun_n) = \Target_1 Q_{1,n}(\testFun_n),
\end{align}
where, hereafter using the convention that any quantity with subscript (\IE{} time index) $0$ is to be ignored from the notation, we have defined the two-parameter semigroup:
\begin{align}
  Q_{p,q}(\testFun_p)(\path_p)
  \coloneqq
  \begin{cases}
   [Q_{p+1} \cdots Q_q](\testFun_q)(\path_p), & \text{if $p < q$,}\\
   \testFun_p(\path_p), & \text{if $p = q$,}
  \end{cases}
\end{align}
for any $1 \leq p \leq q \leq n$, where
\begin{align}
 Q_{p+1}(\path_p, \diff \pathAlt_{p+1})
 \coloneqq \Potential_p(\pathAlt_p) \delta_{\path_p}(\diff \pathAlt_p) \Mutation_{p+1}(\pathAlt_p, \diff \stateAlt_{p+1}).
 \end{align}
This implies that
\begin{align}
 \Target_n(\testFun_n) = \mcmcTarget{n}{\Target_{n-1}}(\testFun_n) \coloneqq \frac{\Target_{n-1} Q_n(\testFun_n)}{\Target_{n-1} Q_n(\unitFun)} = \frac{\uTarget_n(\testFun_n)}{\uTarget_n(\unitFun)},\label{eq:recursioneta}
\end{align}
where we have defined the following family of probability measures
\begin{align}
 \mcmcTarget{n}{\mu}(\diff \path_n)
 & \coloneqq
 \begin{dcases}
  \Mutation_1(\diff \path_1) = \Target_1(\diff \path_1), & \text{if $n = 1$,}\\
  \frac{\Potential_{n-1}(\path_{n-1})}{\mu(\Potential_{n-1})}[\mu \otimes \Mutation_n](\diff \path_n)
  , & \text{if $n > 1$,}
 \end{dcases}
\end{align}
indexed by $\mu \in \probMeasSet(\spacePath_{n-1})$.

For later use, we also define the family of normalised operators
\begin{align}
 \widebar{Q}_{p,n}(\testFun_n)(\path_{p})
 & \coloneqq \frac{Q_{p,n}(\testFun_n)(\path_{p})}{\Target_p Q_{p,n}(\unitFun)}.
\end{align}
Note that this implies that $\Target_n(\testFun_n) = \Target_p \widebar{Q}_{p,n}(\testFun_n)$ for any $1 \leq p \leq n$ and any $\testFun_n \in \boundedFunSet(\spacePath_n)$.

\subsection{Generic MCMC-PF algorithm}
\glsreset{MH}
In Algorithm \ref{alg:mcmc_pf}, we summarise a generic \gls{MCMCPF} scheme for constructing sampling approximations $\Target_n^N$ of $\Target_n$. It admits all the \glspl{MCMCPF} discussed in this work as special cases. We recall that by convention, any quantity with subscript $0$ is to be ignored from the notation.  This algorithm is essentially a \gls{PF} in which the particles are not sampled conditionally independently from $\mcmcTarget{n}{\mu}$ at step~$n$, for some $\mu \in \probMeasSet(\spacePath_{n-1})$,  but are instead sampled according to a Markov chain with initial distribution $\mcmcInit{n}{\mu}\in \probMeasSet(\spacePath_{n})$ and Markov transition kernels $\mcmcKern{n}{\mu}\colon \spacePath_{n} \times \sigFieldPath{n} \to [0,1]$ which are invariant with respect to $\mcmcTarget{n}{\mu}$.

\noindent\parbox{\textwidth}{
\begin{flushleft}
  \begin{framedAlgorithm}[generic \gls{MCMCPF}] \label{alg:mcmc_pf}
At time~$1$,
 \begin{enumerate}
  \item sample $\ParticlePath{1}{1} \sim \smash{\mcmcInit{1}{\Target_{0}^N}}$ and $\ParticlePath{1}{i} \sim \smash{\mcmcKern{1}{\Target_{0}^N}}(\particlePath{1}{i-1}, \ccdot)$, for $2 \leq i \leq N$,
  \item set $\Target_1^N \coloneqq \frac{1}{N} \sum_{i=1}^N \delta_{\ParticlePath{1}{i}}$.
 \end{enumerate}
 At time~$n$, $n > 1$,
 \begin{enumerate}
  \item sample $\ParticlePath{n}{1} \sim \smash{\mcmcInit{n}{\Target_{n-1}^N}}$ and $\ParticlePath{n}{i} \sim \smash{\mcmcKern{n}{\Target_{n-1}^N}}(\particlePath{n}{i-1}, \ccdot)$, for $2 \leq i \leq N$,
  \item set $\Target_n^N \coloneqq \frac{1}{N} \sum_{i=1}^N \delta_{\ParticlePath{n}{i}}$.
 \end{enumerate}
\end{framedAlgorithm}
\end{flushleft}
}

For any  time $n \geq 1$ and for any $\testFun_n \in \boundedFunSet(\spacePath_n)$, $\smash{\uTarget_n^N(\testFun_n) \coloneqq \Target_n^N(\testFun_n) \prod_{p=1}^{n-1} \Target_p^N(\Potential_p)}$ is an estimate of $\uTarget_n(\testFun_n)$. In particular, an estimate of the normalising constant $\normConst_n$ is given by
\begin{align}
 \normConst_n^N \coloneqq \uTarget_n^N(\unitFun) = \prod_{p=1}^{n-1} \Target_p^N(\Potential_p) =  \prod_{p=1}^{n-1} \frac{1}{N} \sum_{i=1}^N \Potential_p(\ParticlePath{p}{i}). \label{eq:normalising_constant_estimate}
\end{align}

We hereafter write $\mcmcTarget{n}{N} \coloneqq \mcmcTarget{n}{\Target_{n-1}^N}$, $\mcmcInit{n}{N} \coloneqq \mcmcInit{n}{\Target_{n-1}^N}$ and $\mcmcKern{n}{N} \coloneqq \mcmcKern{n}{\Target_{n-1}^N}$ to simplify the notation. Note that standard \glspl{PF} are a special case of Algorithm~\ref{alg:mcmc_pf} corresponding to $\mcmcKern{n}{N}(\path_{n}, \ccdot) \equiv \mcmcTarget{n}{N}(\ccdot)  = \mcmcInit{n}{N}( \ccdot)$. Unfortunately, implementing standard \glspl{PF} can become prohibitively costly whenever there is no cheap way of generating $N$ \gls{IID} samples from $\mcmcTarget{n}{N}$ -- which can be the case when $\mcmcTarget{n}{N}$ is chosen for reasons of statistical efficiency rather than computational convenience, as in the case of the \gls{FAAPF} of \citet{pitt1999filtering}. In contrast, Algorithm~\ref{alg:mcmc_pf} only requires the construction of \gls{MCMC} kernels which leave this distribution invariant.

Practitioners typically initialise the Markov chains close to stationarity by selecting $\mcmcInit{n}{N}=[\Target_{n-1}^N \otimes \Mutation_n'] (\mcmcKern{n}{N})^{N_{\mathrm{burnin}} }$ for some approximation $\Mutation_n'(\path_{n-1}, \diff \state_n)$ of $\Mutation_n(\path_{n-1}, \diff \state_n)$ \citep[see, \EG,][]{carmi2012gaussian,golightly2006bayesian,septier2009mcmc, septier2016langevin}. Here, $N_{\mathrm{burnin}} \geq 1$ denotes a suitably large number of iterations whose samples are discarded as ``burn-in''.  Proposition \ref{prop:slln}, below, will demonstrate that, under regularity conditions, such algorithms can provide strongly consistent estimates of quantities of interest in spite of this out-of-equilibrium initialisation.

In situations in which it is possible to initialise the Markov chains at stationarity, \IE{} in which we can initialise $\smash{\ParticlePath{n}{1} \sim \mcmcInit{n}{N}=\mcmcTarget{n}{N}}$, \citet{finke2016embedded} showed that the estimator $\normConst_n^N$ given in \eqref{eq:normalising_constant_estimate} is unbiased as for standard \glspl{PF} \citep{delmoral2004feynman}. This remarkable unbiasedness property permits the use of \glspl{MCMCPF} within pseudo-marginal algorithms \citep{andrieu2009pseudo} and thus to perform Bayesian parameter inference for state-space models. As such an initialisation only requires \emph{one} draw from $\smash{\mcmcTarget{n}{N}}$, the use of relatively expensive methods, such as rejection sampling, may be justifiable. This is in contrast to standard \glspl{PF} which require $N$ such draws the cost of which may be prohibitive.

Furthermore, the conditional \gls{SMC} scheme proposed in \citet{andrieu2010particle} can also be extended to \glspl{MCMCPF} as demonstrated in \citet{shestopaloff2018sampling}. In this case, construction of a suitable initial distribution $\smash{\mcmcInit{n}{N}}$ is not needed.


The literature on \gls{MCMC} algorithms provides numerous ways in which to construct the Markov kernels  $\mcmcKern{n}{\mu}$. For instance, we could use \gls{MH} \citep{berzuini1997dynamic}, \glsdesc{MALA}, \glsdesc{HMC} and hybrid kernels \citep{septier2009mcmc, septier2016langevin}, kernels based on invertible particle flow ideas \citep{li2017sequential} or on the bouncy particle sampler \citep{pal2018sequential}. As an illustration, Example~\ref{ex:independent_mh} describes a simple independent \gls{MH} kernel with a proposal distribution tailored to our setting.

\begin{example}[independent MH] \label{ex:independent_mh}
  For any $n \geq 1$ and any $\mu \in \probMeasSet(\spacePath_{n-1})$, define the proposal distribution
  \begin{align}
  \mcmcProposal{n}{\mu}(\diff \path_n)
  & \coloneqq \label{eq:indpendent_mh_proposal_distribution}
  \begin{dcases}
    \ProposalKernel_1(\diff \path_1), & \text{if $n = 1$,}\\
    \frac{\PotentialProposal_{n-1}(\path_{n-1})}{\mu(\PotentialProposal_{n-1})}[\mu \otimes \ProposalKernel_n](\diff \path_n), & \text{if $n > 1$,}
  \end{dcases}
  \end{align}
  for some sequence of non-negative bounded measurable functions $\PotentialProposal_n\colon \spacePath_n \to [0,\infty)$, some distribution $\ProposalKernel_1 \in \probMeasSet(\spacePath_1)$ with $\Mutation_1 \ll \ProposalKernel_1$, and some sequence of Markov transition kernels $\ProposalKernel_n\colon \spacePath_{n-1} \times \sigFieldState \to [0,1]$ with $\Mutation_n(\path_{n-1}, \ccdot) \ll \ProposalKernel_n(\path_{n-1}, \ccdot)$, for any $\path_{n-1} \in \spacePath_{n-1}$; for any $n \geq 1$, both $\PotentialProposal_{n-1}$ and $\ProposalKernel_n$ are assumed to satisfy
  \begin{align}
   \sup_{\path_{n} \in \spacePath_{n}} \frac{\Potential_{n-1}(\path_{n-1})}{ \PotentialProposal_{n-1}(\path_{n-1})} \frac{\diff \Mutation_n(\path_{n-1}, \ccdot)}{\diff \ProposalKernel_n(\path_{n-1}, \ccdot)}(\state_n) < \infty. \label{eq:bounded_radon-nikodym_derivatives}
  \end{align}
The independent \gls{MH} kernel $\mcmcKern{n}{\mu}$ with proposal distribution $\mcmcProposal{n}{\mu}$ and target\slash invariant distribution $\mcmcTarget{n}{\mu}$ is given by
  \begin{align}
    \mcmcKern{n}{\mu}(\path_n, \diff \pathAlt_n)
    & \coloneqq \alpha_n(\path_n, \pathAlt_n) \mcmcProposal{n}{\mu}(\diff \pathAlt_n)\\
    & \quad + \biggl(1 - \int_{\spacePath_n}\alpha_n(\path_n, \diff \path_n') \mcmcProposal{n}{\mu}(\diff \path_n')\biggr) \delta_{\path_n}(\diff \pathAlt_n),
  \end{align}
  with acceptance probability
  \begin{align}
    \alpha_n(\path_n, \pathAlt_n) \label{eq:independent_mh_acceptance_probability}
    & \coloneqq 1 \wedge \dfrac{\diff \mcmcTarget{n}{\mu}}{\diff \mcmcProposal{n}{\mu}}(\pathAlt_n) \bigg/ \dfrac{\diff \mcmcTarget{n}{\mu}}{\diff \mcmcProposal{n}{\mu}}(\path_n)\\
    & = 1 \wedge \dfrac{\dfrac{\Potential_{n-1}}{\PotentialProposal_{n-1}}(\pathAlt_{n-1})}{\dfrac{\Potential_{n-1}}{\PotentialProposal_{n-1}}(\path_{n-1})} \dfrac{\dfrac{\diff \Mutation_n(\pathAlt_{n-1}, \ccdot)}{\diff \ProposalKernel_n(\pathAlt_{n-1}, \ccdot)}(\stateAlt_n)}{\dfrac{\diff \Mutation_n(\path_{n-1}, \ccdot)}{\diff \ProposalKernel_n(\path_{n-1}, \ccdot)}(\state_n)}.
  \end{align}
 This acceptance probability notably does not depend on $\mu$.
\end{example}

\subsection{Computational cost}
\label{subsec:computational_cost}

If we are interested only in approximating the normalising constant $\normConst_n$ and if $\Potential_{n-1}(\path_{n-1})$ and $\Mutation_n(\path_{n-1}, \ccdot)$ depend upon only a fixed number of the most recent component(s) of $\path_{n-1}$ (as is the case in the state-space models discussed below), Algorithm~\ref{alg:mcmc_pf} can be implemented at a per-time-step complexity (in both space and time) that is linear in the number of particles $N$ and constant in the time horizon~$n$.

\subsection{Application to state-space models}
\label{subsec:application_to_state-space_models}

Let $(\spaceObs, \sigFieldObs)$ be another measurable space. The \gls{MCMCPF} may be used for (but is not limited to) performing inference in a state-space model given by the bivariate Markov chain $(X_n, Y_n)_{n \geq 1}$ on $(\spaceState \times \spaceObs, \sigFieldState \vee \sigFieldObs)$ with initial distribution $\Trans_1(\diff x_1) \obs_1(x_1, y_1)\psi(\diff y_1)$ and with Markov transition kernels (for any $n > 1$)
\begin{equation}
 \Trans_{n}(x_{n-1}, \diff x_{n})\obs_{n}(x_{n}, y_{n}) \psi(\diff y_n).
\end{equation}
Here, $\Trans_1 \in \probMeasSet(\spaceState)$ is some initial distribution for $X_1$, $\Trans_{n} : \spaceState \times \sigFieldState \to [0,1]$, for $n > 1$, is a Markov transition kernel. 
Furthermore, $\psi$ is some suitable $\sigma$-finite dominating measure on $(\spaceObs, \sigFieldObs)$ and some positive bounded function $\obs_n(\ccdot, y_n)$ so that $\obs_n(x_n, y_n) \psi(\diff y_n)$ represents the transition kernels for the observation at time~$n$. Usually, we can only observe realisations of $(Y_n)_{n \geq 1}$ whereas the process $(X_n)_{n \geq 1}$ is latent.

Assume that we have observed realisations $\pathObs_n = (y_1, \dotsc, y_n)$ of $\PathObs_n \coloneqq (Y_1, \dotsc, Y_n)$, then we often wish to compute (expectations under) the
\begin{itemize}
 \item \emph{filter:} $\Filt_n(\testFun_n) \coloneqq \E[\testFun_n(\Path_n)|\PathObs_n = \pathObs_n]$, for $\testFun_n \in \boundedFunSet(\spacePath_n)$,
 \item \emph{predictor:} $\Pred_n(\testFun_n) \coloneqq \E[\testFun_n(\Path_n)|\PathObs_{n-1} = \pathObs_{n-1}]$, for $\testFun_n \in \boundedFunSet(\spacePath_n)$,
 \item \emph{marginal likelihood:} $\marginalLikelihood_n \coloneqq \E[\prod_{p=1}^{n} \obs_p(X_p, y_p)]$.
\end{itemize}
Note that the definitions of ``filter'' and ``predictor'' here refer to the historical process as we are taking a path-space approach. These terms are sometimes reserved for the final-component marginals of $\Filt_n$ and $\Pred_n$; we will use the terms \emph{marginal filter} and \emph{marginal predictor} for those objects.

\begin{example}[\gls{BPF}-type flow] \label{ex:bpf_flow}
  If for any $n \geq 1$,
  \begin{align}
    \Potential_n(\path_n)
    &\coloneqq \obs_n(x_n, y_n),\\
    \Mutation_{n}(\path_{n-1}, \diff x_n)
    &\coloneqq \Trans_{n}(x_{n-1}, \diff x_n), \label{eq:bpf_mutation}
  \end{align}
  then $\Target_n = \Pred_n$ is the time-$n$ predictor, we can recover the time-$n$ filter as $\Target_n(\Potential_n\testFun_n)/\Target_n(\Potential_n) = \Filt_n(\testFun_n)$, and $\normConst_{n+1} = \marginalLikelihood_{n}$ is the marginal likelihood associated with the observations $\pathObs_{n}$ (with $\normConst_1 = 1$).

  In this case, Algorithm~\ref{alg:mcmc_pf} can be implemented (\EG{} using the independent \gls{MH} kernel from Example~\ref{ex:independent_mh}) as long as $\obs_n$, $\PotentialProposal_{n}$ and $\diff \Trans_{n}(x_{n-1}, \ccdot) / \diff \ProposalKernel_n(\path_{n-1},\ccdot)$ can be evaluated point-wise.
\end{example}

\begin{example}[\gls{FAAPF}-type flow] \label{ex:fa-apf_flow}
  If for any $n \geq 1$,
  \begin{align}
   \Potential_n(\path_n)
   &\coloneqq \Trans_{n+1}(\obs_{n+1}(\ccdot, y_{n+1}))(x_n), \label{eq:fa-apf_potential}\\
   \Mutation_{n}(\path_{n-1}, \diff x_{n})
   &\coloneqq \frac{\Trans_{n}(x_{n-1}, \diff x_{n})\obs_{n}(x_{n},y_{n})}{\Trans_{n}(\obs_{n}(\ccdot, y_{n}))(x_{n-1})},  \label{eq:fa-apf_mutation}
  \end{align}
  then $\Target_n = \Filt_n$ is the time-$n$ filter, we can recover the time-$n$ predictor as $\Target_{n-1}\otimes \Trans_n = \Pred_n$, and $\normConst_n = \marginalLikelihood_{n}$ is the marginal likelihood associated with the observations $\pathObs_{n}$.

  For this flow, it follows from \eqref{eq:recursioneta} that sampling $\ParticlePath{n}{i}$ from $\mcmcTarget{n}{N}$ requires first sampling an index $J = j \in \{1,...,N\}$ with probability proportional to $\Potential_{n-1}(\particlePath{n-1}{j})$, setting the first $n-1$ components of $\smash{\ParticlePath{n}{i}}$ equal to $\smash{\particlePath{n-1}{j}}$ and then sampling the final component according to $\smash{\Mutation_{n}(\particlePath{n-1}{j}, \ccdot)}$. There are many scenarios in which this is not feasible as both \eqref{eq:fa-apf_potential} and \eqref{eq:fa-apf_mutation} involve an intractable integral.
  However, designing an \gls{MCMC} kernel of invariant distribution $\mcmcTarget{n}{N}$ is a much easier task as the product $\Potential_{n-1}(\path_{n-1})\Mutation_{n}(\path_{n-1}, \diff x_{n})$ does not involve any intractable integral. For example, if we use the independent \gls{MH} kernel from Example~\ref{ex:independent_mh} then the acceptance probability in \eqref{eq:independent_mh_acceptance_probability} reduces to (for simplicity, we take $\PotentialProposal_{n-1} \equiv 1$):
  \begin{align}
    \alpha_n(\path_n, \pathAlt_n)
    & = 1 \wedge \dfrac{\obs_{n}(\stateAlt_n, y_n)}{\obs_{n}(\state_{n}, y_n)} \dfrac{\dfrac{\diff \Trans_n(\stateAlt_{n-1}, \ccdot)}{\diff \ProposalKernel_n(\pathAlt_{n-1}, \ccdot)}(\stateAlt_n)}{\dfrac{\diff \Trans_n(\state_{n-1}, \ccdot)}{\diff \ProposalKernel_n(\path_{n-1}, \ccdot)}(\state_n)}.
  \end{align}
\end{example}

\begin{example}[general \gls{APF}-type flow] \label{ex:general_apf_flow}
  Let $\Target_1$ be some approximation of $\Filt_1$ and, for $n \geq 1$, let $\Mutation_{n+1}(\path_{n}, \diff x_{n+1})$ be some approximation of \eqref{eq:fa-apf_mutation} as well as
  \begin{align}
   \Potential_n(\path_n)
   &\coloneqq
   \begin{dcases}
    \frac{\diff \Trans_1}{\diff \Target_1}(\state_1) \obs_1(x_1, y_1) \tilde{\obs}_1(x_1, y_2), & \text{if $n = 1$,}\\
    \frac{\diff \Trans_n(\state_{n-1}, \ccdot)}{\diff \Mutation_n(\path_{n-1}, \ccdot)}(\state_n) \frac{\obs_n(x_n, y_n) \tilde{\obs}_n(x_n, y_{n+1})}{\tilde{\obs}_{n-1}(x_{n-1}, y_n)}, & \text{if $n > 1$.}
   \end{dcases}
  \end{align}
  Here, $\tilde{\obs}_n(x_n, y_{n+1})$ denotes some tractable approximation of \eqref{eq:fa-apf_potential} which can be evaluated point-wise. More generally, we could incorporate information from observations at times $n+1,\dotsc,n+l$ for some $l \geq 1$ into $\Mutation_{n+1}(\path_{n}, \diff x_{n+1})$ and replace $\tilde{\obs}_n(x_n, y_{n+1})$ by some approximation of $\smash{\int_{\spaceState^l} \prod_{p=n+1}^{n+l} \Trans_{p}(\path_{p-1}, \diff \state_p) \obs_p(\state_p, y_p)}$ as in the case of \emph{lookahead} methods \citep[see][for example]{lin2013}.

  Note that the (general) \gls{APF} flow admits the two other flows as special cases. That is, taking $\Mutation_n$ as in \eqref{eq:bpf_mutation} and $\tilde{\obs}_n \equiv 1$ yields \gls{BPF}-type flow; taking $\Mutation_n$ as in \eqref{eq:fa-apf_mutation} and $\tilde{\obs}_n(x_n, y_{n+1}) = \Trans_{n+1}(\obs_{n+1}(\ccdot, y_{n+1}))(x_n)$ yields the \gls{FAAPF}-type flow.
\end{example}

In the remainder of this work, we will refer to Algorithm~\ref{alg:mcmc_pf} as the \emph{\gls{MCMCBPF}} whenever the distribution flow $(\Target_n)_{n \geq 1}$ is defined as in Example~\ref{ex:bpf_flow}, as \emph{\gls{MCMCFAAPF}} whenever the flow is defined as in Example~\ref{ex:fa-apf_flow} and as \emph{\gls{MCMCAPF}} whenever the flow is defined as in Example~\ref{ex:general_apf_flow}. Furthermore, we drop the prefix ``\gls{MCMC}'' when referring to the conventional \gls{PF}-analogues of these algorithms, \IE{} in the case that $\smash{\mcmcKern{n}{\mu}(\path_n,\ccdot) \equiv \mcmcTarget{n}{\mu} = \mcmcInit{n}{\mu}}$.

\section{Main Results}
\glsreset{SLLN}
\glsreset{CLT}
In this section, we state a \gls{SLLN} (Proposition~\ref{prop:slln}) and a \gls{CLT} (Proposition~\ref{prop:clt}) for the approximations of the normalised and unnormalised flows $(\Target_n)_{n \geq 1}$ and $(\uTarget_n)_{n \geq 1}$ generated by an \gls{MCMCPF}.

\subsection{Assumptions}

We make the following assumptions about the \gls{MCMC} kernels used to sample the particles at each time step. The first assumption on the \gls{MCMC} kernels ensures that they are suitably ergodic (it corresponds to assuming that the kernels used are uniformly ergodic, uniformly in their invariant distribution) and is the only assumption required to obtain the \gls{SLLN}.
The second assumption on the \gls{MCMC} kernels is a Lipschitz-type condition.

\begin{enumerate}[label=\textbf{(A\arabic*)}, ref=\textbf{A\arabic*}]
 \item \label{as:ergodicity}
  For any $n \geq 1$, there exists $i_n \in \naturals$ and $\varepsilon_n(K) \in (0,1]$  such that for all $\mu \in \probMeasSet(\spacePath_{n-1})$ and all $\path_n, \pathAlt_n \in \spacePath_n$:
  \begin{align}
   (\mcmcKern{n}{\mu})^{i_n}(\path_n, \ccdot) \geq \varepsilon_n(K) (\mcmcKern{n}{\mu})^{i_n}(\pathAlt_n, \ccdot).
  \end{align}

 \item \label{as:lipschitz}
 For any $n\geq 1$, there exists a constant $\boundIntegralOperatorLipschitz{n} < \infty$ and a family of bounded integral operators $(\integralOperatorLipschitz{n}{\mu})_{\mu \in \probMeasSet(\spacePath_{n-1})}$ from $\boundedFunSet(\spacePath_{n-1})$ to $\boundedFunSet(\spacePath_n)$ such that for any $(\mu, \nu) \in \probMeasSet(\spacePath_{n-1})^2$ and any $\testFun_n \in \boundedFunSet(\spacePath_n)$,
  \begin{align}
    \lVert [\mcmcKern{n}{\mu} - \mcmcKern{n}{\nu}](\testFun_n) \rVert
    &\leq \int_{\boundedFunSet(\spacePath_{n-1})} \lvert [\mu - \nu](\testFunAlt) \rvert \integralOperatorLipschitz{n}{\mu}(\testFun_n, \diff \testFunAlt) \label{eq:lipschitz:1}
  \end{align}
  and 
  \begin{align}
    \int_{\boundedFunSet(\spacePath_{n-1})} \lVert \testFunAlt \rVert \integralOperatorLipschitz{n}{\mu}(\testFun_n, \diff \testFunAlt) \leq \lVert \testFun_n \rVert \boundIntegralOperatorLipschitz{n}. \label{eq:lipschitz:2}
  \end{align}
\end{enumerate}
Recall that for any bounded integral operator $M\colon \boundedFunSet(\spaceGeneric) \to \boundedFunSet(\spaceGeneric)$, $\beta(M) \coloneqq \sup_{(x,y) \in \spaceGeneric \times \spaceGeneric} \lVert M(x,\ccdot) - M(y, \ccdot)\rVert$ is the associated Dobrushin coefficient. Note that Assumption~\ref{as:ergodicity} implies that
\begin{equation}
  \sup_{\mu \in \probMeasSet(\spacePath_{n-1})}\beta((\mcmcKern{n}{\mu})^{i_n}) \leq 1 - \varepsilon_n(K) < 1. \label{eq:as:ergodicity}
\end{equation}
In addition, if $(\ParticlePath{p}{i})_{i \geq 1}$ and $(\ParticlePathStationary{p}{i})_{i \geq 1}$ are Markov chains with transition kernels $\mcmcKern{n}{\mu}$, with $\smash{(\ParticlePathStationary{p}{i})_{i \geq 1}}$ initialised from stationarity, then a standard coupling argument shows that Assumption~\ref{as:ergodicity} also implies that for any $\nParticles, r \in \naturals$ and any $\testFun_n \in \boundedFunSet(\spacePath_n)$ with $\lVert \testFun_n\rVert \leq 1$,
\begin{align}
 \E\biggl[\biggl\lvert \sum_{i=1}^\nParticles \testFun_n(\ParticlePath{n}{i}) - \testFun_n(\ParticlePathStationary{n}{i}) \biggr\rvert^r\biggr]^{\mathrlap{\frac{1}{r}}}
 & \leq \sum_{i=1}^\nParticles \E\bigl[\bigl\lvert \testFun_n(\ParticlePath{n}{i}) - \testFun_n(\ParticlePathStationary{n}{i}) \bigr\rvert^r\bigr]^{\frac{1}{r}}\\
 & \leq 2 \lVert \testFun_n \rVert \sum_{i=1}^\nParticles (1 - \varepsilon_n(K))^{\lfloor i/i_n \rfloor}\\
 & \leq 2 i_n / \varepsilon_n(K) \eqqcolon \boundResolvent{n}. \label{eq:bound_resolvent}
\end{align}

The assumptions are similar to those imposed in \citet{bercu2012fluctuations}. They are strong and rarely hold for non-compact spaces. It might be possible to adopt weaker conditions such as those in \citet{andrieu2011nonlinear} but this would involve substantially more technical and complicated proofs. As an illustration, we show that Assumptions~\ref{as:ergodicity} and \ref{as:lipschitz} hold if we employ the independent \gls{MH} kernels from Example~\ref{ex:independent_mh}, at least if $\spaceState$ is finite.

\begin{example}[independent MH, continued]
 Assumption~\ref{as:ergodicity} is satisfied due to \citet[][Theorem~2.1]{mengersen1996rates}. To see this, note that by \eqref{eq:bounded_radon-nikodym_derivatives}, for any $n \geq 1$ and any $\mu \in \probMeasSet(\spacePath_{n-1})$, and since $\PotentialProposal_{n-1}$ is bounded and $\Potential_{n-1} > 0$,
 \begin{align}
  \sup_{\path_n \in \spacePath_n}\frac{\diff \mcmcTarget{n}{\mu}}{\diff \mcmcProposal{n}{\mu}}(\path_n)
  \leq \frac{\lVert\PotentialProposal_{n-1}\rVert}{\mu(\Potential_{n-1})} \sup_{\path_{n} \in \spacePath_{n}} \frac{\Potential_{n-1}(\path_{n-1})}{ \PotentialProposal_{n-1}(\path_{n-1})} \frac{\diff \Mutation_n(\path_{n-1}, \ccdot)}{\diff \ProposalKernel_n(\path_{n-1}, \ccdot)}(\state_n) < \infty.
 \end{align}
 Assumption~\ref{as:lipschitz} was proved for finite spaces $\spaceState$ (and in the case $\PotentialProposal_{n} = \Potential_n$ but the extension to $\PotentialProposal_n \neq \Potential_n$ is immediate) in \citet[][Section~2]{bercu2012fluctuations}.
\end{example}

When proving time-uniform convergence results, we also make the following assumptions on the mutation kernels and potential functions of the Feynman--Kac model. The first of these ensures that Assumptions~\ref{as:ergodicity} holds uniformly in time. The second and third of these constitute strong mixing conditions that have been extensively used in the analysis of \gls{SMC} algorithms, although they can often be relaxed in similar settings this comes at the cost of greatly complicating the analysis \citep[see, \EG,][]{whiteley2013stability, douc2014long}.

\begin{enumerate}[label=\textbf{(B\arabic*)}, ref=\textbf{B\arabic*}]

 \item \label{as:stability_mcmc_kernels} $\bar{\imath} \coloneqq \sup_{n \geq 1} i_n < \infty$ and $\varepsilon(K) \coloneqq \inf_{n \geq 1} \varepsilon_n(K) > 0$.


 \item \label{as:stability_mutation} There exists ${m} \in \naturals$ and $\varepsilon(\Mutation) \in (0,1]$ such that for any $n \geq 1$, any $\path_n, \pathAlt_n \in \spacePath_n$ and any $\varphi \in \boundedFunSet(\spaceState)$:
 \begin{align}
  \MoveEqLeft \int_{\spaceState^m} \biggl[\prod_{\smash{p=1}}^{m} \Mutation_{n+p}(\path_{n+p-1}, \diff \state_{n+p})\biggr]\varphi(\state_{n+{m}})\\
  & \geq \varepsilon(\Mutation) \int_{\spaceState^{m}} \biggl[\prod_{p=1}^{\smash{{m}}} \Mutation_{n+p}(\pathAlt_{n+p-1}, \diff \stateAlt_{n+p})\biggr]\varphi(\stateAlt_{n+{m}}).
 \end{align}


 \item \label{as:stability_potential} There exists $l \in \naturals$ and $\varepsilon(\Potential) \in (0,1]$ such that for any $n \geq 1$ and any $\path_n, \pathAlt_n \in \spacePath_n$:
 \begin{align}
  \Potential_n(\path_n) = \Potential_n((\pathAlt_{n-l-1}, \state_{((n-l) {\vee} 1):n})) \quad \text{and} \quad
  \Potential_n(\path_n) \geq \varepsilon(\Potential) \Potential_n(\pathAlt_n).
 \end{align}

\end{enumerate}
Under these conditions, time-uniform bounds will be obtained when the test function under study has supremum norm of at most $1$ and depends upon only its final coordinate marginal, \IE{} we will restrict our attention to test functions $\testFun_n \in \smash{\boundedFunSet_1^\star}(\spacePath_n)^d$, where $\smash{\boundedFunSet_1^\star(\spacePath_n) \coloneqq \{ \testFun_n' \in \boundedFunSet^\star(\spacePath_n) \,|\,\lVert\testFun_n'\rVert \leq 1\}}$ with
\begin{align}
  \smash{\boundedFunSet^\star}(\spacePath_n)
  \coloneqq \{ \testFun_n' \in \boundedFunSet(\spacePath_n) \,|\, \exists \, \varphi \in \boundedFunSet(\spaceState):  \testFun_n' = \varphi \mathrel{\circ} \zeta_{n}\}.
 \end{align}
Here, for any $n \geq 1$, $\zeta_{n}: \spacePath_n \to \spaceState$ denotes the canonical final-coordinate projection operator defined through $\path_n \mapsto \zeta_{n}(\path_n) \coloneqq \state_n$. In the state space context this corresponds, essentially, to considering the approximation of the marginal filter and predictor rather than their path-space analogues.

\subsection{Strong law of large numbers}
The first main result in this work is the \gls{SLLN} stated in Proposition \ref{prop:slln}. Its proof is an immediate consequence of a slightly stronger $\mathbb{L}_r$-inequality given in Proposition~\ref{prop:lr_inequality}, the proof of which will be given in Appendix~\ref{app:proofs}.

\begin{proposition}[$\mathbb{L}_r$-inequality]~\label{prop:lr_inequality}
 Under Assumption~\ref{as:ergodicity}, for each $r,n \geq 1$ there exist $\boundTime{n}, \boundExponent{r} < \infty$ such that for any $\testFun_n \in \boundedFunSet(\spacePath_n)$ and any $N \geq 1$:
 \begin{align}
  \E\bigl[ \bigl\lvert [\Target_n^N - \Target_n](\testFun_n) \bigr\rvert^r \bigr]^{\frac{1}{r}} \leq \frac{\boundTime{n} \boundExponent{r}}{\sqrt{N}}\lVert \testFun_n \rVert. \label{eq:lr-error}
 \end{align}
  Under the additional Assumptions~\ref{as:stability_mcmc_kernels}--\ref{as:stability_potential} and if $\testFun_n \in \smash{\boundedFunSet_1^\star}(\spacePath_n)$, the \RHS{} of \eqref{eq:lr-error} is bounded uniformly in time, \IE{} there exist $a < \infty $ such that $\sup_{n \geq 1 }a_n \leq a$.
\end{proposition}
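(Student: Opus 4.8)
The plan is to follow the classical telescoping/decomposition strategy used for particle filters (as in \citet{delmoral2004feynman}), adapted to the fact that the particles at each step form an ergodic Markov chain rather than an i.i.d.\ sample. First I would write, for any test function $\testFun_n \in \boundedFunSet(\spacePath_n)$, the telescoping decomposition
\begin{align}
 [\Target_n^N - \Target_n](\testFun_n) = \sum_{p=1}^{n} \bigl[\Target_p^N \widebar{Q}_{p,n} - \mcmcTarget{p}{p-1,N}\widebar{Q}_{p,n}\bigr](\testFun_n) + (\text{boundary term at }p=1),
\end{align}
where, informally, $\mcmcTarget{p}{N}\widebar{Q}_{p,n}(\testFun_n)$ is the ``one-step-ahead'' quantity that $\Target_p^N$ approximates given $\Target_{p-1}^N$. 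Using the semigroup identity $\Target_n = \Target_p\widebar{Q}_{p,n}$ and $\mcmcTarget{p}{\mu}\widebar{Q}_{p,n} = \widebar{Q}_{p-1,n}$-type relations, each summand becomes the error of approximating $\mcmcTarget{p}{N}(\varphi_p)$ by $\Target_p^N(\varphi_p)$ for $\varphi_p = \widebar{Q}_{p,n}(\testFun_n)$, conditionally on $\calF_{p-1}^N \coloneqq \sigma(\Target_{p-1}^N)$. The key point is that, conditionally on $\calF_{p-1}^N$, the particles $(\ParticlePath{p}{i})_{i=1}^N$ are a Markov chain with kernel $\mcmcKern{p}{N}$ invariant for $\mcmcTarget{p}{N}$; so by \eqref{eq:bound_resolvent} (the coupling consequence of Assumption~\ref{as:ergodicity}) one can compare $\Target_p^N(\varphi_p)$ to its stationary counterpart at an $\mathbb{L}_r$-cost $O(\boundResolvent{p}/N)$, and the stationary version satisfies a Markov-chain $\mathbb{L}_r$ bound $O(1/\sqrt N)$ via, e.g., the Poisson-equation/resolvent bound or a Marcinkiewicz--Zygmund-type inequality for uniformly ergodic chains (again controlled by $i_p,\varepsilon_p(K)$). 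Combining, the conditional $\mathbb{L}_r$-norm of summand $p$ is $\leq c_r \boundResolvent{p} N^{-1/2}\lVert \varphi_p\rVert \leq c_r \boundResolvent{p} N^{-1/2}\lVert\testFun_n\rVert$ since $\widebar{Q}_{p,n}$ has operator norm $\leq 1$ on the relevant functions (it is a normalised non-negative operator). Summing over $p=1,\dots,n$ and using Minkowski gives \eqref{eq:lr-error} with $\boundTime{n} = \sum_{p=1}^n c\,\boundResolvent{p}$ and $\boundExponent{r} = c_r$.

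For the time-uniform part, under Assumptions~\ref{as:stability_mcmc_kernels}--\ref{as:stability_potential} and for $\testFun_n \in \boundedFunSet_1^\star(\spacePath_n)$, the decomposition is the same but one must show the coefficients $\boundResolvent{p}$ are bounded (immediate from \ref{as:stability_mcmc_kernels}, which gives $\boundResolvent{p} \leq 2\bar\imath/\varepsilon(K)$) and, crucially, that the ``weight'' attached to each summand decays geometrically in $n-p$. This is the standard Feynman--Kac stability argument: under the strong mixing conditions \ref{as:stability_mutation}--\ref{as:stability_potential}, for $\testFun_n$ depending only on its final coordinate, the normalised operator $\widebar{Q}_{p,n}$ acting on such functions contracts --- more precisely $\osc(\widebar{Q}_{p,n}(\varphi\circ\zeta_n)) \leq \rho^{\lfloor(n-p)/(m+l)\rfloor}\osc(\varphi)$ for some $\rho<1$ depending only on $\varepsilon(\Mutation),\varepsilon(\Potential)$, and the relevant $\mathbb{L}_r$-fluctuation at step $p$ depends on $\varphi_p$ only through its oscillation (centre the test function, since $\mcmcTarget{p}{N}$ and $\Target_p^N$ are both probability measures). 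Hence summand $p$ contributes $\leq c\,(2\bar\imath/\varepsilon(K))\,c_r\,\rho^{\lfloor(n-p)/(m+l)\rfloor} N^{-1/2}$, and $\sum_{p=1}^n \rho^{\lfloor(n-p)/(m+l)\rfloor} \leq (m+l)/(1-\rho) \eqqcolon a < \infty$, uniformly in $n$.

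I expect the main obstacle to be the rigorous treatment of the conditional step: establishing a clean $\mathbb{L}_r$-bound for $\frac1N\sum_{i=1}^N \varphi(\ParticlePath{p}{i}) - \mcmcTarget{p}{N}(\varphi)$ that holds uniformly over all realisations of $\calF_{p-1}^N$ (hence uniformly over the invariant distribution $\mcmcTarget{p}{N}=\mcmcTarget{p}{\Target_{p-1}^N}$), and dovetailing it with the non-stationary initialisation $\mcmcInit{p}{N}$ via the coupling bound \eqref{eq:bound_resolvent}. The subtlety is that $\mcmcKern{p}{N}$ and its invariant law are themselves random (they depend on $\Target_{p-1}^N$), so the bound must be genuinely uniform in $\mu\in\probMeasSet(\spacePath_{p-1})$ --- which is exactly what Assumption~\ref{as:ergodicity} (and \ref{as:stability_mcmc_kernels} for the time-uniform version) is designed to provide. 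A secondary, more bookkeeping-level obstacle is making the telescoping decomposition precise when the chain is out of equilibrium, i.e.\ correctly identifying what $\Target_p^N$ is an approximation \emph{of} conditionally on the past, and handling the dependence of $\widehat{Q}$-type terms on the empirical measures; but this is routine once the uniform conditional bound is in hand.
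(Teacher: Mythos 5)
Your plan for the time-uniform half is essentially the paper's own proof, but the route you propose for the fixed-$n$ bound \eqref{eq:lr-error} under Assumption~\ref{as:ergodicity} alone has a genuine gap. The linear telescoping you start from is not exact: because of the normalisations one has $\mcmcTarget{p}{\Target_{p-1}^N}\widebar{Q}_{p,n}(\testFun_n) = \frac{\Target_{p-1}(\Potential_{p-1})}{\Target_{p-1}^N(\Potential_{p-1})}\,\Target_{p-1}^N\widebar{Q}_{p-1,n}(\testFun_n)$, so consecutive terms do not cancel and there is no identity of the form $[\Target_n^N-\Target_n](\testFun_n)=\sum_{p=1}^n[\Target_p^N-\mcmcTarget{p}{N}](\widebar{Q}_{p,n}(\testFun_n))$. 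The exact decomposition is the nonlinear one \eqref{eq:standard_telescoping-sum_decomposition}, whose summands $\mcmcTarget{p,n}{\Target_p^N}(\testFun_n)-\mcmcTarget{p,n}{\mcmcTarget{p}{N}}(\testFun_n)$ are differences of ratios; converting them into linear errors $[\Target_p^N-\mcmcTarget{p}{N}](\ccdot)$ costs either the random denominator $\Target_p^N Q_{p,n}(\unitFun)$, whose inverse moments you cannot control, or, via the argument of \citet[pp.~244--246]{delmoral2004feynman} that the paper follows, a factor $r_{p,n}\,\beta(P_{p,n})$. Under Assumption~\ref{as:ergodicity} alone the potentials are only positive and bounded, so $r_{p,n}=\sup_{\path_p,\pathAlt_p}Q_{p,n}(\unitFun)(\path_p)/Q_{p,n}(\unitFun)(\pathAlt_p)$ may be infinite, and your argument stalls exactly there. (Relatedly, $\lVert\widebar{Q}_{p,n}(\testFun_n)\rVert\leq\lVert\testFun_n\rVert$ is false: $\widebar{Q}_{p,n}$ is normalised only through $\Target_p\widebar{Q}_{p,n}(\unitFun)=1$, and its sup-norm is in general only bounded by $\normConst_p/\normConst_n\geq 1$; this alone would merely inflate constants, but it is symptomatic of the missing ratio control.) The paper avoids all of this in the first part by an induction on $n$: write $[\Target_n^N-\Target_n]=[\Target_n^N-\mcmcTarget{n}{N}]+[\mcmcTarget{n}{N}-\Target_n]$; the local term is treated exactly as you propose (resolvent\slash Poisson-equation martingale for the stationary auxiliary chain, Burkholder--Davis--Gundy as in Lemma~\ref{lem:lr_inequality_for_martingale_u}, and the coupling bound \eqref{eq:bound_resolvent} for the out-of-equilibrium initialisation), while the second term is reduced to time-$(n-1)$ errors after multiplying through by the deterministic constant $\Target_{n-1}(\Potential_{n-1})>0$, using only $\lVert Q_n(\unitFun)\rVert\leq 1$ and $\lVert Q_n(\testFun_n)\rVert\leq\lVert\testFun_n\rVert$, so no ratio-type quantity is required.

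For the second part your argument coincides with the paper's: precisely because the induction constants cannot easily be made time-uniform, the paper switches to \eqref{eq:standard_telescoping-sum_decomposition}, bounds the $p$-th summand by $2\lvert[U_p^N+L_p^N+R_p^N](\widebar{Q}_{p,n}^N(\testFun_n))\rvert\,r_{p,n}\,\beta^\star(P_{p,n})$, and invokes Lemma~\ref{lem:time-uniform_bounds} under Assumptions~\ref{as:stability_mcmc_kernels}--\ref{as:stability_potential} to obtain $\boundResolvent{p}\leq\boundResolventUniform$, $r_{p,n}\leq\bar r$ and the geometric decay $\beta^\star(P_{p,n})\leq\bar\beta^{\lfloor(n-p)/m\rfloor}$ for final-coordinate test functions, before summing the series -- the same contraction-of-$\widebar{Q}_{p,n}$ mechanism you describe through oscillations. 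So the uniform half of your plan is sound once the linear telescoping is replaced by the exact nonlinear one; the gap is confined to obtaining \eqref{eq:lr-error} for general $\testFun_n\in\boundedFunSet(\spacePath_n)$ under Assumption~\ref{as:ergodicity} only, which is where the paper's induction is doing work that your telescoping cannot.
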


\begin{proposition}[strong law of large numbers]~\label{prop:slln}
 Under Assumption~\ref{as:ergodicity}, for any $n,d \geq 1$ and $\testFun_n \in \boundedFunSet(\spacePath_n)^d$, as $N\to \infty$,
 \begin{enumerate}
  \item \label{prop:slln:normalised_predictors} $\Target_n^N(\testFun_n) \convergesAlmostSurely \Target_n(\testFun_n)$,
  \item \label{prop:slln:unnormalised_predictors} $\uTarget_n^N(\testFun_n) \convergesAlmostSurely \uTarget_n(\testFun_n)$.
 \end{enumerate}
\end{proposition}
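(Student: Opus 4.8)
The plan is to deduce the SLLN from the $\mathbb{L}_r$-inequality of Proposition~\ref{prop:lr_inequality} via a standard Borel--Cantelli argument, handling the normalised and unnormalised flows in turn. First I would reduce to the scalar case: since almost-sure convergence of an $\reals^d$-valued sequence is equivalent to coordinatewise convergence, it suffices to treat $\testFun_n \in \boundedFunSet(\spacePath_n)$. For part~\ref{prop:slln:normalised_predictors}, fix $n$ and $\testFun_n$; apply Proposition~\ref{prop:lr_inequality} with $r$ chosen large enough that $N^{-r/2}$ is summable (any $r \geq 3$ works, but I would take $r$ to be an even integer $\geq 3$ so there are no absolute-value subtleties). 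By Markov's inequality, for any $\delta > 0$,
\begin{align}
 \Prob\bigl(\bigl\lvert [\Target_n^N - \Target_n](\testFun_n)\bigr\rvert > \delta\bigr)
 \leq \frac{\E\bigl[\bigl\lvert [\Target_n^N - \Target_n](\testFun_n)\bigr\rvert^r\bigr]}{\delta^r}
 \leq \frac{(\boundTime{n}\boundExponent{r})^r \lVert\testFun_n\rVert^r}{\delta^r N^{r/2}},
\end{align}
which is summable over $N$; hence Borel--Cantelli gives $\Prob\bigl(\limsup_N \{\lvert[\Target_n^N - \Target_n](\testFun_n)\rvert > \delta\}\bigr) = 0$, and intersecting over $\delta = 1/k$, $k \in \naturals$, yields $\Target_n^N(\testFun_n) \convergesAlmostSurely \Target_n(\testFun_n)$.

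For part~\ref{prop:slln:unnormalised_predictors}, I would proceed by induction on $n$, recalling $\uTarget_n^N(\testFun_n) = \Target_n^N(\testFun_n)\prod_{p=1}^{n-1}\Target_p^N(\Potential_p)$ and $\uTarget_n(\testFun_n) = \Target_n(\testFun_n)\prod_{p=1}^{n-1}\Target_p(\Potential_p)$. Part~\ref{prop:slln:normalised_predictors} applied to $\testFun_n$ and to each $\Potential_p \in \boundedFunSet(\spacePath_p)$ (note $\Potential_p$ is bounded by assumption) gives $\Target_n^N(\testFun_n) \convergesAlmostSurely \Target_n(\testFun_n)$ and $\Target_p^N(\Potential_p) \convergesAlmostSurely \Target_p(\Potential_p)$ for each $p < n$. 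Since a finite product of almost-surely convergent real sequences converges almost surely to the product of the limits (on the intersection of the corresponding probability-one events, which is still probability one), the conclusion follows immediately; no induction is even strictly needed, but phrasing it that way makes the bookkeeping transparent.

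The only mild subtlety — and the place I would be slightly careful — is that Proposition~\ref{prop:lr_inequality} is stated for integer $r \geq 1$, so I must pick an admissible $r$ with $r/2 > 1$, \IE{} $r \geq 3$; with that choice $\sum_N N^{-r/2} < \infty$ and the Borel--Cantelli step goes through verbatim. Everything else is routine: there is no real obstacle here since Proposition~\ref{prop:lr_inequality} does all the genuine work, and the SLLN is, as the text says, "an immediate consequence". I would keep the write-up to a few lines, citing Proposition~\ref{prop:lr_inequality}, Markov's inequality, and the first Borel--Cantelli lemma.
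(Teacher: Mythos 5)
Your proposal is correct and follows essentially the same route as the paper: apply Proposition~\ref{prop:lr_inequality} with $r>2$, use Markov's inequality and Borel--Cantelli for part~\ref{prop:slln:normalised_predictors}, and deduce part~\ref{prop:slln:unnormalised_predictors} from the product representation $\uTarget_n^N(\testFun_n) = \Target_n^N(\testFun_n)\prod_{p=1}^{n-1}\Target_p^N(\Potential_p)$ together with boundedness of the potentials. The added remarks about choosing an integer $r\geq 3$ and intersecting over $\delta = 1/k$ are just explicit bookkeeping of steps the paper leaves implicit.
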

\begin{proof}
 Without loss of generality, we prove the result for scalar-valued test functions $\testFun_n \in \boundedFunSet(\spacePath_n)$.  Part~\ref{prop:slln:normalised_predictors} is a direct consequence of Proposition~\ref{prop:lr_inequality}, for some $r > 2$, using the Borel--Cantelli Lemma together with Markov's inequality.  Part~\ref{prop:slln:unnormalised_predictors} follows from Part~\ref{prop:slln:normalised_predictors} and boundedness of the potential functions $\Potential_p$, \IE
 \begin{align}
  \uTarget_n^N(\testFun_n) = \Target_n^N(\testFun_n) \prod_{p=1}^{n-1} \Target_p^N(\Potential_p) \convergesAlmostSurely \Target_n(\testFun_n) \prod_{p=1}^{n-1} \Target_p(\Potential_p) = \uTarget_n(\testFun_n),
 \end{align}
 as $N \to \infty$. This completes the proof. \hfill \ensuremath{_\Box}
\end{proof}

\subsection{Central limit theorem}

The second main result is Proposition~\ref{prop:clt} which adapts the usual \gls{CLT} for \gls{SMC} algorithms \citep[\EG][Propositions~9.4.1 \& 9.4.2]{delmoral2004feynman} to our setting. Its proof is given in Appendix~\ref{app:proofs}. As in \citet{delmoral2010interacting, bercu2012fluctuations}, we will make extensive use of the resolvent operators $\resolv{n}{\mu}$, for any $\mu \in \probMeasSet(\spacePath_{n-1})$ and any $\testFun_n \in \boundedFunSet(\spacePath_n)$ defined by
\begin{equation}
 \resolv{n}{\mu}(\testFun_n) \coloneqq \sum_{j = 0}^\infty [(\mcmcKern{n}{\mu})^j - \mcmcTarget{n}{\mu}](\testFun_n).
\end{equation}
These operators satisfy the \emph{Poisson equation}
\begin{align}
 (\mcmcKern{n}{\mu} - \Id)\resolv{n}{\mu} &= \mcmcTarget{n}{\mu} - \Id, \label{eq:poisson_equation_1}\\
 \mcmcTarget{n}{\mu} \resolv{n}{\mu} &\equiv 0. \label{eq:poisson_equation_2}
\end{align}
 Under Assumption~\ref{as:ergodicity}, \citet[Proposition~3.1]{bercu2012fluctuations} show that 
  \begin{equation}
  \sup_{\mu \in \probMeasSet(\spacePath_{n-1})}
  \lVert \resolv{n}{\mu}\rVert \leq \boundResolvent{n},\label{eq:bound_on_resolvent}
\end{equation}
where $\boundResolvent{n}$ is given in \eqref{eq:bound_resolvent}.

In the following, for $n\geq 1$, we consider a vector-valued test function $\testFun_n = (\testFun_n^u)_{1 \leq u \leq d} \in \boundedFunSet(\spacePath_n)^d$. Using the resolvent operators, for any $1 \leq u,v \leq d$, we define the covariance function $\covarianceFunction{n}{\mu}(\testFun_n^u,\testFun_n^v)\colon \spacePath_n \to \reals$, for any $\path_n \in \spacePath_n$ given by
\begin{align}
 \MoveEqLeft\covarianceFunction{n}{\mu}(\testFun_n^u,\testFun_n^v)(\path_n) \label{eq:definition_of_covFun}\\
 & \coloneqq \mcmcKern{n}{\mu}[(\resolv{n}{\mu}(\testFun_n^u)-\mcmcKern{n}{\mu}\resolv{n}{\mu}(\testFun_n^u)(\path_n))(\resolv{n}{\mu}(\testFun_n^v)-\mcmcKern{n}{\mu}\resolv{n}{\mu}(\testFun_n^v)(\path_n))](\path_n).
\end{align}
Under Assumption~\ref{as:ergodicity}, we have $\covarianceFunction{n}{\mu}(\testFun_n^u,\testFun_n^v) \in \boundedFunSet(\spacePath_n)$, for any $1 \leq u, v, \leq d$ and any $\mu \in \probMeasSet(\spacePath_{n-1})$. Indeed, using \eqref{eq:bound_on_resolvent}, it is straightforward to check that
\begin{equation}
  \sup_{\mu \in \probMeasSet(\spacePath_{n-1})}\lVert \covarianceFunction{n}{\mu}(\testFun_n^u, \testFun_n^v) \rVert \leq 4 \boundResolvent{n}^2 \lVert \testFun_n^u\rVert \lVert \testFun_n^v \rVert. \label{eq:boundedness_of_covFun}
\end{equation}

Throughout the remainder of this work, let $\localErrorSingle = (\localErrorSingle_n)_{n \geq 1}$ be a sequence of independent and centred Gaussian fields with
\begin{align}
 \E[\localErrorSingle_n(\testFun_n^u)\localErrorSingle_n(\testFun_n^v)] = \Target_n\covarianceFunction{n}{\Target_{n-1}}(\testFun_n^u, \testFun_n^v), \label{eq:local_error_covariance_function}
\end{align}
and define the $(d,d)$-matrix $\varSigma_n(\testFun_n) \coloneqq (\varSigma_n(\testFun_n^u, \testFun_n^v))_{1\leq u,v\leq d}$ by
\begin{equation}
 \varSigma_n(\testFun_n^u, \testFun_n^v)
 \coloneqq \sum_{p=1}^n
 \E[\localErrorSingle_n(\widebar{Q}_{p,n}(\testFun_n^u))\localErrorSingle_n(\widebar{Q}_{p,n}(\testFun_n^v))],
 \label{eq:asymptotic_variance}
\end{equation}
for any $n \geq 1$, any $\testFun_n = (\testFun_n^u)_{1 \leq u \leq d} \in \boundedFunSet(\spacePath_n)^d$ and any $1 \leq u,v \leq d$. Additionally, let $\dN(0, \varSigma)$ denote a (multivariate) centred Gaussian distribution with some covariance matrix $\varSigma$.

\begin{proposition}[central limit theorem]~\label{prop:clt}
Under Assumptions~\ref{as:ergodicity} and \ref{as:lipschitz}, for any $n,d \geq 1$ and any $\testFun_n \in \boundedFunSet(\spacePath_n)^d$, as $N \to \infty$,
\begin{align}
  \label{enum:prop:clt:unnormalised} \frac{\sqrt{N}}{\uTarget_n(\unitFun)} [\uTarget_n^N - \uTarget_n](\testFun_n)
  & \convergesInDistribution \sum_{p=1}^n \localErrorSingle_p(\widebar{Q}_{p,n}(\testFun_n)) \sim \dN(0, \varSigma_n(\testFun_n)),
\end{align}
and likewise, writing $\bar{\testFun}_n \coloneqq \testFun_n - \Target_n(\testFun_n)$,
\begin{align}
 \label{enum:prop:clt:normalised} \sqrt{N}[\Target_n^N - \Target_n](\testFun_n)
 & \convergesInDistribution \sum_{p=1}^n  \localErrorSingle_p(\widebar{Q}_{p,n}(\bar{\testFun}_n)) \sim \dN(0, \varSigma_n(\bar{\testFun}_n)).
\end{align}
Under the additional Assumptions~\ref{as:stability_mcmc_kernels}--\ref{as:stability_potential} and if $\testFun_n \in \smash{\boundedFunSet_1^\star}(\spacePath_n)^d$, the asymptotic variance in \eqref{enum:prop:clt:normalised} is bounded uniformly in time, \IE{} there exists $c < \infty$ such that
\begin{align}
 \sup \varSigma_n(\bar{\testFun}_n^u, \bar{\testFun}_n^v) \leq c,
 \label{eq:bound_on_asymptotic_variance}
\end{align}
where the supremum is over all $n \geq 1$, $\testFun_n \in \smash{\boundedFunSet_1^\star}(\spacePath_n)^d$ and $1 \leq u,v \leq d$.

\end{proposition}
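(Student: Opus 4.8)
The plan is to follow the template developed for interacting MCMC methods in \citet{delmoral2010interacting} and \citet{bercu2012fluctuations}: (i) prove a \emph{local} CLT for the sampling error committed at a single time step, controlling the within-step Markov dependence through the resolvent operators; (ii) patch the local errors together via a telescoping Feynman--Kac decomposition and finish with Slutsky's lemma; and (iii) for the time-uniform bound, estimate the propagated local variances using the classical stability theory of normalised Feynman--Kac semigroups, which is exactly where the strong mixing Assumptions~\ref{as:stability_mutation}--\ref{as:stability_potential} and the restriction to $\boundedFunSet_1^\star$ enter.

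\emph{Step 1 (local CLT).} Let $\calF_{n}^N$ be the $\sigma$-algebra generated by all particles up to time $n$. Conditionally on $\calF_{n-1}^N$ the particles $(\ParticlePath{n}{i})_{1\le i\le N}$ form a Markov chain with $\calF_{n-1}^N$-measurable kernel $\mcmcKern{n}{N}$ and invariant law $\mcmcTarget{n}{N}$. Applying the Poisson equation \eqref{eq:poisson_equation_1}, i.e.\ $\testFun_n-\mcmcTarget{n}{N}(\testFun_n)=\resolv{n}{N}(\testFun_n)-\mcmcKern{n}{N}\resolv{n}{N}(\testFun_n)$, and writing $\Delta_{n,i}^N\coloneqq\resolv{n}{N}(\testFun_n)(\ParticlePath{n}{i})-\mcmcKern{n}{N}\resolv{n}{N}(\testFun_n)(\ParticlePath{n}{i-1})$, one obtains
\begin{equation}
 \sqrt{N}\,[\Target_n^N-\mcmcTarget{n}{N}](\testFun_n)
 = \frac{1}{\sqrt{N}}\sum_{i=2}^N\Delta_{n,i}^N
 +\frac{1}{\sqrt{N}}\bigl(\resolv{n}{N}(\testFun_n)(\ParticlePath{n}{1})-\mcmcKern{n}{N}\resolv{n}{N}(\testFun_n)(\ParticlePath{n}{N})\bigr).
\end{equation}
Conditionally on $\calF_{n-1}^N$, $(\Delta_{n,i}^N)_{i\ge2}$ is a martingale-difference array with conditional variances $\covarianceFunction{n}{N}(\testFun_n,\testFun_n)(\ParticlePath{n}{i-1})$; by \eqref{eq:bound_on_resolvent} the boundary term is bounded by $2\boundResolvent{n}\lVert\testFun_n\rVert$, hence $\bo(N^{-1/2})$, irrespective of the (possibly out-of-equilibrium) initial law $\mcmcInit{n}{N}$. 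A martingale CLT for triangular arrays applies (the Lindeberg condition being automatic since $\lvert\Delta_{n,i}^N\rvert\le2\boundResolvent{n}\lVert\testFun_n\rVert$) once the normalised conditional variance converges: $\tfrac1N\sum_{i=2}^N\covarianceFunction{n}{N}(\testFun_n,\testFun_n)(\ParticlePath{n}{i-1})\to\mcmcTarget{n}{N}(\covarianceFunction{n}{N}(\testFun_n,\testFun_n))$ by an ergodic-average estimate of the type leading to \eqref{eq:bound_resolvent} applied to the bounded function $\covarianceFunction{n}{N}(\testFun_n,\testFun_n)$, and then $\mcmcTarget{n}{N}(\covarianceFunction{n}{N}(\testFun_n,\testFun_n))\convergesAlmostSurely\Target_n\covarianceFunction{n}{\Target_{n-1}}(\testFun_n,\testFun_n)$ using $\Target_{n-1}^N\convergesAlmostSurely\Target_{n-1}$ (Proposition~\ref{prop:slln}) together with the continuity of $\mu\mapsto\mcmcKern{n}{\mu}$, hence of $\mu\mapsto\resolv{n}{\mu}$ and $\mu\mapsto\covarianceFunction{n}{\mu}$, supplied by Assumption~\ref{as:lipschitz}, with \eqref{eq:as:ergodicity} making the resolvent series converge uniformly in $\mu$. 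This gives $\sqrt{N}[\Target_n^N-\mcmcTarget{n}{N}](\testFun_n)\convergesInDistribution\localErrorSingle_n(\testFun_n)$, the vector- and bilinear-form versions following by the Cram\'er--Wold device and polarisation.

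\emph{Step 2 (telescoping).} For $1\le q\le n$ set $\mathcal{I}_q^N\coloneqq(\normConst_q^N/\normConst_q)\,\Target_q^N(\widebar{Q}_{q,n}(\testFun_n))$ and $\mathcal{I}_0^N\coloneqq\Target_n(\testFun_n)$; since $\widebar{Q}_{n,n}$ is the identity and $\normConst_n=\uTarget_n(\unitFun)$ one has $\mathcal{I}_n^N=\uTarget_n^N(\testFun_n)/\uTarget_n(\unitFun)$, while $\mathcal{I}_0^N=\Target_1\widebar{Q}_{1,n}(\testFun_n)=\mcmcTarget{1}{N}(\widebar{Q}_{1,n}(\testFun_n))$. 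Using the identity $\mcmcTarget{q}{\mu}(Q_{q,n}(\testFun_n))=\mu(Q_{q-1,n}(\testFun_n))/\mu(\Potential_{q-1})$ (read off from \eqref{eq:recursioneta} and the definition of $\mcmcTarget{q}{\mu}$) together with $\Target_{q-1}(\Potential_{q-1})\,\Target_qQ_{q,n}(\unitFun)=\Target_{q-1}Q_{q-1,n}(\unitFun)$, a short computation gives
\begin{equation}
 \sqrt{N}\,[\mathcal{I}_q^N-\mathcal{I}_{q-1}^N]
 =\frac{\normConst_q^N}{\normConst_q}\,\sqrt{N}\,[\Target_q^N-\mcmcTarget{q}{N}](\widebar{Q}_{q,n}(\testFun_n)),\qquad 1\le q\le n,
\end{equation}
so summing over $q$ telescopes the left-hand side to $\tfrac{\sqrt{N}}{\uTarget_n(\unitFun)}[\uTarget_n^N-\uTarget_n](\testFun_n)$. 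By Proposition~\ref{prop:slln}, $\normConst_q^N/\normConst_q\convergesAlmostSurely1$, so by Slutsky's lemma it remains to establish \emph{joint} convergence of $\bigl(\sqrt{N}[\Target_q^N-\mcmcTarget{q}{N}](\widebar{Q}_{q,n}(\testFun_n))\bigr)_{1\le q\le n}$ to the independent fields $\bigl(\localErrorSingle_q(\widebar{Q}_{q,n}(\testFun_n))\bigr)_{1\le q\le n}$. This is done by induction on $q$ via nested conditioning on $\calF_{q-1}^N$: by Step~1 the conditional characteristic function of the $q$-th term converges to a \emph{deterministic} Gaussian characteristic function, so the tower property and dominated convergence yield joint convergence with independent centred Gaussian limits of covariances $\Target_q\covarianceFunction{q}{\Target_{q-1}}$, producing precisely $\varSigma_n(\testFun_n)$ of \eqref{eq:asymptotic_variance}. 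Statement \eqref{enum:prop:clt:normalised} follows from \eqref{enum:prop:clt:unnormalised}: with $\bar{\testFun}_n=\testFun_n-\Target_n(\testFun_n)$ one has $\uTarget_n(\bar{\testFun}_n)=0$ and $\sqrt{N}[\Target_n^N-\Target_n](\testFun_n)=(\normConst_n/\normConst_n^N)\cdot\tfrac{\sqrt{N}}{\uTarget_n(\unitFun)}[\uTarget_n^N-\uTarget_n](\bar{\testFun}_n)$ with $\normConst_n/\normConst_n^N\convergesAlmostSurely1$.

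\emph{Step 3 (time-uniform bound) and main obstacle.} By \eqref{eq:local_error_covariance_function}, \eqref{eq:asymptotic_variance}, \eqref{eq:boundedness_of_covFun} and Assumption~\ref{as:stability_mcmc_kernels} (giving $\sup_{p\ge1}\boundResolvent{p}=2\bar{\imath}/\varepsilon(K)\eqqcolon\boundResolventUniform<\infty$),
\begin{equation}
 \lvert\varSigma_n(\bar{\testFun}_n^u,\bar{\testFun}_n^v)\rvert
 \le4\boundResolventUniform^2\sum_{p=1}^n\lVert\widebar{Q}_{p,n}(\bar{\testFun}_n^u)\rVert\,\lVert\widebar{Q}_{p,n}(\bar{\testFun}_n^v)\rVert.
\end{equation}
Since $\testFun_n^u\in\boundedFunSet_1^\star(\spacePath_n)$ depends only on its terminal coordinate and $\Target_p\widebar{Q}_{p,n}(\bar{\testFun}_n^u)=\Target_n(\bar{\testFun}_n^u)=0$, Assumptions~\ref{as:stability_mutation}--\ref{as:stability_potential} yield the classical estimate $\lVert\widebar{Q}_{p,n}(\bar{\testFun}_n^u)\rVert\le C\rho^{n-p}$ with $C<\infty$, $\rho\in(0,1)$ independent of $n,p,u$ \citep[see, e.g.,][]{delmoral2004feynman}; this is exactly where the restriction to $\boundedFunSet_1^\star$ is needed, the path-space semigroup $\widebar{Q}_{p,n}$ not forgetting the early coordinates of general test functions. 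Summing the geometric series gives $\sup_n\lvert\varSigma_n(\bar{\testFun}_n^u,\bar{\testFun}_n^v)\rvert\le4\boundResolventUniform^2C^2/(1-\rho^2)\eqqcolon c$, uniformly in $u,v$ and over $\testFun_n\in\boundedFunSet_1^\star(\spacePath_n)^d$. I expect the main obstacle to be Step~1 and its nested use in Step~2: one must simultaneously handle the non-i.i.d.\ (Markovian) within-step structure through the Poisson-equation/resolvent device and a martingale-array CLT, \emph{and} the fact that the driving kernel $\mcmcKern{q}{N}$ is itself random through $\Target_{q-1}^N$, so that the conditional asymptotic variance must be shown to stabilise to the deterministic $\Target_q\covarianceFunction{q}{\Target_{q-1}}$ — which is what forces the combined use of Proposition~\ref{prop:slln} and Assumption~\ref{as:lipschitz}. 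The remaining ingredients (the telescoping algebra, Slutsky, the Feynman--Kac stability estimate) are essentially standard.
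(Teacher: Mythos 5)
Your proposal is correct and follows the same overall strategy as the paper: the Poisson equation and resolvent operators to create a martingale-difference structure within each time step, a martingale CLT whose conditional variances stabilise via the SLLN together with the Lipschitz condition \ref{as:lipschitz} (this is exactly the role of Lemma~\ref{lem:convergence_of_covFun}), the standard Feynman--Kac telescoping decomposition plus Slutsky for both \eqref{enum:prop:clt:unnormalised} and \eqref{enum:prop:clt:normalised}, and the semigroup stability estimates (the paper's Lemma~\ref{lem:time-uniform_bounds}) for the time-uniform bound \eqref{eq:bound_on_asymptotic_variance}. Two execution choices differ, and both of yours are viable. First, the paper deals with the out-of-equilibrium initial law $\mcmcInit{n}{N}$ by introducing the auxiliary stationary chain $(\ParticlePathStationary{p}{i})_{i \geq 1}$ and splitting the local error as $\martingaleSingle_p^N + \remainderSingle_p^{N} + R_p^N$, with $R_p^N$ controlled by the coupling bound \eqref{eq:bound_on_second_remainder_measure}; you instead run the martingale argument directly on the non-stationary chain, which is legitimate because the increments remain conditionally centred for any initial law, and the initialisation then only enters through the conditional-variance average, where the same coupling estimate makes the bias $\bo(1/N)$ -- leaner bookkeeping, at the price of the ergodic-average control you sketch. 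Second, for joint convergence across time steps the paper re-indexes all increments into a single triangular $\calG^N$-martingale array and applies the martingale CLT once (Proposition~\ref{prop:convergence_of_martingale_M}, then Corollary~\ref{cor:convergence_of_local_errors} and the arguments of Propositions~9.4.1--9.4.2 of Del Moral, 2004), whereas you use induction over time with nested conditioning and convergence of conditional characteristic functions to a deterministic Gaussian limit; this is the Chopin-style route, makes the independence of the limiting fields transparent, but needs the conditional CLT stated with some care (stable convergence or dominated convergence of conditional characteristic functions). Your telescoping identity with $\mathcal{I}_q^N$ is precisely the decomposition underlying the cited del Moral propositions, and your Step~3 reproduces the paper's variance bound with the constants left implicit, so no substantive gap remains.
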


\section{Comparison with standard PFs}

\subsection{Variance decomposition}

In this section, we first examine the asymptotic variance from Proposition~\ref{prop:clt}. We then illustrate the trade-off between \glspl{MCMCPF} and standard \glspl{PF}.

To ease the exposition, we only consider scalar-valued test functions $\testFun_n \in \boundedFunSet(\spacePath_n)$ throughout this section. As noted in \citet[Proposition~3.6]{bercu2012fluctuations}, the terms $\mcmcTarget{n}{\mu} \covarianceFunction{n}{\mu}(\testFun_n, \testFun_n)$ from \eqref{eq:local_error_covariance_function} which, via \eqref{eq:asymptotic_variance}, appear in the expressions for the asymptotic variance in Proposition~\ref{prop:clt} can be written in the following form which is more commonly used in the \gls{MCMC} literature:
 \begin{align}
  \mcmcTarget{n}{\mu} \covarianceFunction{n}{\mu}(\testFun_n, \testFun_n)
  & = \textstyle \int_{\spacePath_n^2} \mcmcTarget{n}{\mu}(\diff \path_n)\mcmcKern{n}{\mu}(\path_n, \diff \pathAlt_n)\bigl[\resolv{n}{\mu}(\testFun_n)(\pathAlt_n) - \mcmcKern{n}{\mu} \resolv{n}{\mu}(\testFun_n)(\path_n)\bigr]^2\\
  & = \mcmcTarget{n}{\mu}\bigl(\resolv{n}{\mu}(\testFun_n)^2 - \mcmcKern{n}{\mu}\resolv{n}{\mu}(\testFun_n)^2\bigr)\\
  & = \mcmcTarget{n}{\mu}\bigl(\resolv{n}{\mu}(\testFun_n)^2 - [\mcmcTarget{n}{\mu}(\testFun_n) - \testFun_n + \resolv{n}{\mu}(\testFun_n)]^2\bigr) \quad \text{[by \eqref{eq:poisson_equation_1}]}\\
  & = \mcmcTarget{n}{\mu}\bigl(- [\testFun_n - \mcmcTarget{n}{\mu}(\testFun_n)]^2 - 2[\mcmcTarget{n}{\mu}(\testFun_n) - \testFun_n]\resolv{n}{\mu}(\testFun_n)\bigr)\\
  & = \mcmcTarget{n}{\mu}\bigl(- [\testFun_n - \mcmcTarget{n}{\mu}(\testFun_n)]^2 + 2 \testFun_n\resolv{n}{\mu}(\testFun_n)\bigr) \quad \text{[by \eqref{eq:poisson_equation_2}]}\\
  &= \var_{\mcmcTarget{n}{\mu}}[\testFun_n] \times \iact_{\mcmcKern{n}{\mu}}[\testFun_n]. \label{eq:asymptotic_variance_decomposition}
 \end{align}
 Here, for any probability measure $\nu \in \probMeasSet(\spacePath_n)$ and any $\nu$-invariant Markov kernel $K$, we have defined the \gls{IACT}:
 \begin{align}
  \iact_{K}[\testFun_n]
  & \coloneqq 1 + 2 \sum_{j=1}^\infty \frac{\cov_{\nu}[\testFun_n, K^j(\testFun_n)]}{\var_{\nu}[\testFun_n]},
 \end{align}
where $\cov_{\nu}[\testFun_n, \testFunAlt_n] \coloneqq \nu([\testFun_n - \nu(\testFun_n)][\testFunAlt_n - \nu(\testFunAlt_n)])$ and $\var_{\nu}[\testFun_n] \coloneqq \cov_{\nu}[\testFun_n, \testFun_n] = \nu([\testFun_n - \nu(\testFun_n)]^2)$.

If the \gls{MCMC} kernels $\mcmcKern{n}{\mu}$ are perfectly mixing, that is if $\mcmcKern{n}{\mu}(\path_n, \ccdot) = \mcmcTarget{n}{\mu}(\ccdot)$ for all $\path_n \in \spacePath_n$, then $\iact_{\mcmcKern{n}{\mu}}[\testFun_n] = 1$, \IE{} $\mcmcTarget{n}{\mu} \covarianceFunction{n}{\mu}(\testFun_n, \testFun_n) = \var_{\mcmcTarget{n}{\mu}}[\testFun_n]$, and the expressions for the asymptotic variances in Proposition~\ref{prop:clt} (as specified through \eqref{eq:local_error_covariance_function} and \eqref{eq:asymptotic_variance}) simplify to those obtained in \citet{chopin2004central, delmoral2004feynman, kunsch2005recursive} for conventional \gls{SMC} algorithms. Thus, by the decomposition from \eqref{eq:asymptotic_variance_decomposition}, the terms appearing in the asymptotic variance of the \gls{MCMCPF} are equal to those appearing in the asymptotic variance of standard \glspl{PF} multiplied by the \gls{IACT} associated with the \gls{MCMC} kernels used to generate the particles.

For positive \gls{MCMC} operators, the \gls{IACT} terms are greater than $1$ for any $\testFun_n \in \boundedFunSet(\spacePath_n)$ and represent the variance ``penalty'' incurred due to the additional between-particle positive correlations in \glspl{MCMCPF} relative to standard \glspl{PF}. Examples of positive operators include the independent \gls{MH} kernel \citep{liu1996metropolized} discussed in Example \ref{ex:independent_mh}, the \gls{MH} kernel with Gaussian or Student-t random walk proposals \citep{baxendale2005renewal} or autoregressive positively correlated proposals with normal or Student-t innovations \citep{doucet2015efficient} as well as some versions of the hit-and-run and slice sampling algorithms \citep{rudolf2013positivity}.

\subsection{Variance--variance trade-off}
\label{subsec:variance-variance_trade_off}

There is an efficiency trade-off involved in deciding whether to employ a standard \gls{PF} or an \gls{MCMCPF} for a particular application. For the same distribution flow $(\Target_n)_{n \geq 1}$ the former always has a lower asymptotic variance than the latter if the \gls{MCMC} draws are positively correlated. However, as we seek to illustrate in the remainder of this section, an \gls{MCMCPF} may still be preferable (in terms of asymptotic variance) to a standard \gls{PF} in certain situations, even if a positive MCMC kernel is used, because it can sometimes be used to target a more efficient distribution flow, \IE{} a flow for which the variance terms $\var_{\mcmcTarget{n}{\mu}}[\testFun_n]$ are reduced far enough to compensate for the \gls{IACT}-based ``penalty'' terms $\iact_{\mcmcKern{n}{\mu}}[\testFun_n]$ in \eqref{eq:asymptotic_variance_decomposition}. Additionally the computational cost of generating one particle in an \gls{MCMCPF} can be smaller than the corresponding cost in a standard \gls{PF}.

As an illustration, we compare the asymptotic variances of approximations $\Filt_n^N$ of the filter $\Filt_n$ either computed using the standard \glspl{PF} or \glspl{MCMCPF} targeting the \gls{BPF} and \gls{FAAPF} flows in the state-space model from Subsection~\ref{subsec:application_to_state-space_models}. 
In the remainder of this section, we let $S_{p,n}\colon \boundedFunSet(\spacePath_p) \to \boundedFunSet(\spacePath_n)$ be a kernel that satisfies $\Filt_p S_{p,n} = \Filt_n$ and which is given by
\begin{align}
 S_{p,n}(\testFun_n)(\path_p)
 & \coloneqq \frac{\marginalLikelihood_p}{\marginalLikelihood_n} \int_{\spacePath_n}  \testFun_n(\pathAlt_n) \delta_{\path_p}(\diff \pathAlt_p) \smashoperator{\prod_{q=p+1}^n} \obs_{q}(\stateAlt_q, y_q) \Trans_{q}(\stateAlt_{q-1}, \diff \stateAlt_q).
\end{align}
We begin by deriving expressions for the asymptotic variances in each case.

\begin{itemize}
  \item \textbf{\gls{BPF} flow.} For the \gls{BPF} flow from Example~\ref{ex:bpf_flow},
  expectations under the filter $\Filt_n(\testFun_n) = \Target_n(\Potential_n\testFun_n)/\Target_n(\Potential_n)$ may then be approximated by $\Filt_n^N(\testFun_n) = \Target_n^N(\Potential_n\testFun_n)/\Target_n^N(\Potential_n)$. Accounting for this transformation \citep[\EG{} as in][]{johansen2007auxiliary} yields
  \begin{align}
   [\Filt_n^N - \Filt_n](\testFun_n)
   & = \frac{\Target_n(\Potential_n)}{\Target_n^N(\Potential_n)} \frac{\Target_n^N(\Potential_n[\testFun_n - \Filt_n(\testFun_n)])}{\Target_n(\Potential_n)}\\
   & = \frac{\Target_n(\Potential_n)}{\Target_n^N(\Potential_n)} [\Target_n^N - \Target_n](\Potential_n[\testFun_n - \Filt_n(\testFun_n)]/\Target_n(\Potential_n)).
  \end{align}
  As Proposition~\ref{prop:slln} ensures that $\Target_n(\Potential_n)/\Target_n^N(\Potential_n) \convergesAlmostSurely 1$, Slutzky's Lemma and Proposition~\ref{prop:clt} are sufficient to show that for the \gls{BPF} and \gls{MCMCBPF}, respectively, $\sqrt{N}[\Filt_n^N - \Filt_n](\testFun_n)$ converges in distribution to a Gaussian distribution with zero mean and variance
  \begin{align}
   \asymptoticVarianceBPF{n}(\testFun_n)
   & = \sum_{p=1}^n \var_{\Pred_p}[\tilde{\testFun}_{p,n}],\label{eq:asymptoticVarianceBpf}\\
   \asymptoticVarianceMCMCBPF{n}(\testFun_n)
   & = \sum_{p=1}^n \var_{\Pred_p}[\tilde{\testFun}_{p,n}] \times \iact_{\mcmcKern{p}{{\Pred_{p-1}}}}[\tilde{\testFun}_{p,n}],\label{eq:asymptoticVarianceMcmcBpf}
  \end{align}
  with, using that $\Target_n(\Potential_n [\testFun_n - \Filt_n(\testFun_n)]/\Target_n(\Potential_n)) = 0$,
  \begin{align}
   \tilde{\testFun}_{p,n}(\path_p)
   & \coloneqq \widebar{Q}_{p,n}(\Potential_n[\testFun_n - \Filt_n(\testFun_n)]/\Target_n(\Potential_n))(\path_p)\\
   & = \obs_p(\state_p, y_{p}) \frac{\marginalLikelihood_{p-1}}{\marginalLikelihood_p} S_{p,n}(\testFun_n - \Filt_n(\testFun_n))(\path_p).
  \end{align}

\item \textbf{\gls{FAAPF} flow.} For the \gls{FAAPF} flow from Example~\ref{ex:fa-apf_flow}, $\Filt_n = \Target_n$,  so that we may approximate the filter by $\Filt_n^N \coloneqq \Target_n^N$. Hence, Proposition~\ref{prop:clt} shows that for the \gls{FAAPF} and \gls{MCMCFAAPF}, respectively, $\sqrt{N}[\Filt_n^N - \Filt_n](\testFun_n)$ converges in distribution to a Gaussian distribution with zero mean and variance
  \begin{align}
   \asymptoticVarianceFAAPF{n}(\testFun_n)
   & = \sum_{p=1}^n \var_{\Filt_p}[\testFun_{p,n}], \label{eq:asymptoticVarianceFaapf}\\
   \asymptoticVarianceMCMCFAAPF{n}(\testFun_n)
   & = \sum_{p=1}^n \var_{\Filt_p}[\testFun_{p,n}] \times \iact_{\mcmcKern{p}{{\Filt_{p-1}}}}[\testFun_{p,n}],\label{eq:asymptoticVarianceMcmcFaapf}
  \end{align}
  with
  \begin{align}
  \testFun_{p,n}(\path_p)
   \coloneqq \widebar{Q}_{p,n}(\testFun_n - \Filt_n(\testFun_n))(\path_p)
   = S_{p,n}(\testFun_n - \Filt_n(\testFun_n))(\path_p).
  \end{align}
\end{itemize}

For the remainder of this section, assume that the asymptotic variance of the standard \gls{FAAPF} is lower than that of the standard \gls{BPF} for the given state-space model. More precisely, we assume that for each $p \leq n$
\begin{align}
 \var_{\Filt_p}[\testFun_{p,n}] \leq \var_{\Pred_p}[\tilde{\testFun}_{p,n}], \quad \textrm{ and hence that } \quad \asymptoticVarianceFAAPF{n}(\testFun_n) \leq \asymptoticVarianceBPF{n}(\testFun_n).
\end{align}
This is thought to hold in many applications and has been empirically verified \EG{} in \citet{snyder2015performance}, although it is  possible to construct counter-examples \citep{johansen2008note}. Assuming that the \gls{MCMC} kernels $\mcmcKern{p}{\mu}$ are positive operators, then the \glspl{IACT} take values in $[1,\infty)$ and hence
\begin{align}
 \asymptoticVarianceBPF{n}(\testFun_n) \leq \asymptoticVarianceMCMCBPF{n}(\testFun_n) \quad \text{and} \quad \asymptoticVarianceFAAPF{n}(\testFun_n) \leq \asymptoticVarianceMCMCFAAPF{n}(\testFun_n).
\end{align}
However, as noted in Example~\ref{ex:fa-apf_flow}, there are many scenarios where \gls{FAAPF} cannot be implemented as we cannot generate $N$ (conditionally) \gls{IID} samples from $\mcmcTarget{n}{N}$. In this case, practitioners typically have to resort to using the standard \gls{BPF} instead. In contrast, the \gls{MCMCFAAPF} can usually be implemented. In such circumstances, use of \glspl{MCMCPF} (specifically in the form of the \gls{MCMCFAAPF}) can preferable, \EG{} if the variance reductions attained by targeting the \gls{FAAPF} flow are large enough to outweigh the additional variance due to the increased particle correlation, \IE{} if for each $1 \leq p \leq n$,
\begin{align}
 \var_{\Filt_p}[\testFun_{p,n}] \times \iact_{\mcmcKern{p}{{\Filt_{p-1}}}}[\testFun_{p,n}] \leq \var_{\Pred_p}[\tilde{\testFun}_{p,n}]
\end{align}
because then
\begin{align}
 \asymptoticVarianceMCMCFAAPF{n}(\testFun_n) \leq \asymptoticVarianceBPF{n}(\testFun_n).
\end{align}

\subsection{Numerical illustration}

We end this section by illustrating the `variance--variance trade-off' mentioned above on two instances of the state-space model from Subsection~\ref{subsec:application_to_state-space_models}.

The first model is a state-space model on a binary space $\spaceState = \spaceObs \coloneqq \{0,1\}$ and with $n=2$ observations: $y_1 = y_2 = 0$. Furthermore, for some $\alpha, \varepsilon \in [0,1]$ and for any $x_1, x_2 \in \spaceState$, $\mu \in \probMeasSet(\spacePath_{n-1})$ and any $n \in \{1,2\}$,
\begin{gather}
 \Trans_1(\{x_1\}) \coloneqq 1/2, \quad \Trans_2(x_1, \{x_2\}) \coloneqq \alpha \ind\{x_2 = x_1\} + (1-\alpha) \ind\{x_2 \neq x_1\},\\
 \obs_n(x_n, y_n) \coloneqq 0.99 \ind\{y_n = x_n\} + 0.01 \ind\{y_n \neq x_n\},\\
 \mcmcKern{n}{\mu}(\path_{n}, \ccdot) \coloneqq \varepsilon \delta_{\path_{n}} + (1-\varepsilon) \mcmcTarget{n}{\mu}.
\end{gather}
While this is clearly only a toy model, we consider it for two reasons. Firstly, it allows us to analytically evaluate the asymptotic variances for standard \glspl{PF} and \glspl{MCMCPF} given in \eqref{eq:asymptoticVarianceBpf}, \eqref{eq:asymptoticVarianceMcmcBpf}, \eqref{eq:asymptoticVarianceFaapf} and \eqref{eq:asymptoticVarianceMcmcFaapf}. Secondly, as discussed in \citet{johansen2008note}, the model allows us to select the parameter $\alpha$ in such a way that the \gls{FAAPF} has either a lower or higher asymptotic variance than the \gls{BPF}.

Figure~\ref{fig:binary_ssm} displays the asymptotic variances relative to the asymptotic variance of the standard \gls{BPF} for the test Function $\testFun_2(\path_{2}) = x_2$ and for two different values of the parameter $\alpha$. As displayed in the first panel, a relatively large value of $\alpha$ leads to the somewhat contrived case that the \gls{BPF} is more efficient than the \gls{FAAPF}. However, as displayed in the second panel, a small value of $\alpha$ makes the \gls{FAAPF} more efficient than the \gls{BPF}. This is because if the system is in state $0$ at time $1$, the time-$2$ proposal used by the \gls{FAAPF} incorporates the observation $y_2 = 0$ and whereas the time-$2$ proposal used by the \gls{BPF} almost always proposes moves to state $1$. In this case, the \gls{MCMCFAAPF} then outperforms the \gls{BPF} as long as the autocorrelation of the \gls{MCMC} kernels used by the former, $\varepsilon$, is not too large.

\begin{figure}[h]
 \centering
 \includegraphics[scale=1.2, trim = 2cm 1.8cm 1cm 1.5cm]{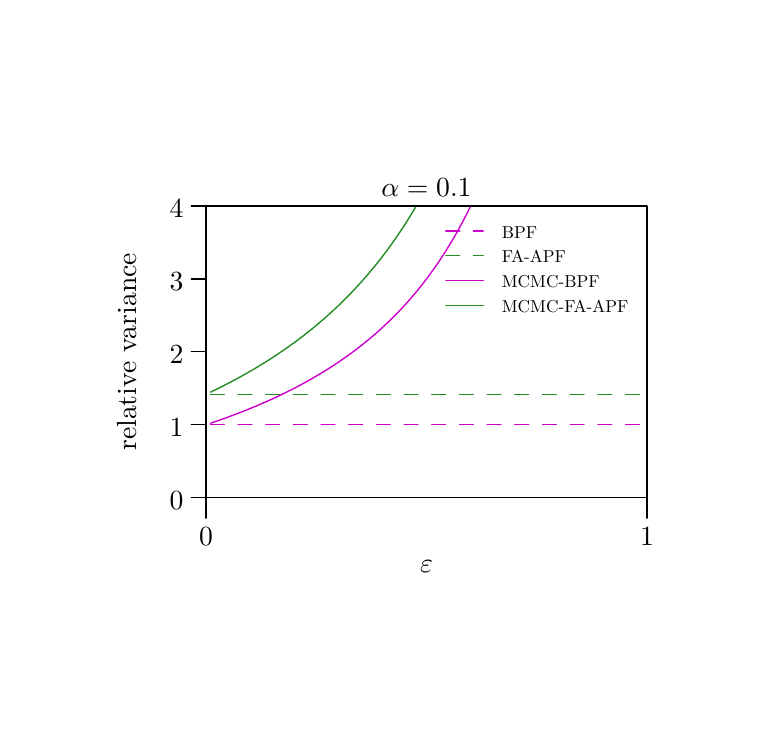}
 \includegraphics[scale=1.2, trim = 1.8cm 1.8cm 1.8cm 1.5cm]{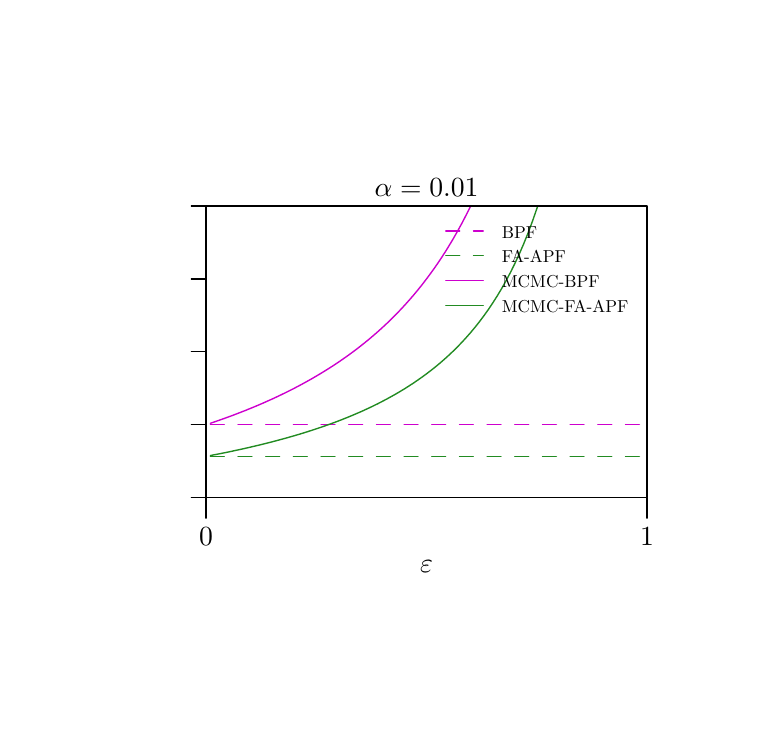}
 \caption{Asymptotic variances (relative to the asymptotic variance of the \gls{BPF}) of the algorithms discussed in Subsection~\ref{subsec:variance-variance_trade_off} in the case that the \gls{BPF} flow is more efficient than the \gls{FAAPF} flow (first panel) and in the case that the \gls{BPF} flow is less efficient than the \gls{FAAPF} flow (second panel).}
 \label{fig:binary_ssm}
\end{figure}


We stress that in practical situations, one might expect a much more pronounced difference between the performance of the \gls{FAAPF} and the \gls{BPF} than is observed in this toy model, and hence that Markov kernels with rather modest mixing properties can still give rise to an \gls{MCMCFAAPF} that can outperform the \gls{BPF} in some situations. Indeed, this appears to be the case in the second model discussed below.

The second model is a $d$-dimensional linear Gaussian state-space model given by $\spaceState = \spaceObs \coloneqq \reals^d$. Furthermore, writing the $d$-dimensional state and observation vector at time $n$ as $\State_n = (\State_{n,i})_{1 \leq i \leq d}$ and $Y_n = (Y_{n,i})_{1 \leq i \leq d}$, respectively,
\begin{gather}
 \frac{\diff \Trans_1}{\diff \lambda^{\otimes d}}(\state_1) = \prod_{i=1}^d \phi(\state_{1,i}), \quad  \frac{\diff \Trans_n(\state_{n-1}, \ccdot)}{\diff \lambda^{\otimes d}}(\state_n) = \prod_{i=1}^{\smash{d}} \phi(\state_{n,i} - \state_{n-1,i}/2),\\
 \obs_n(x_n, y_n) = \prod_{i=1}^d \phi(\state_{n,i} - y_{n,i}),
\end{gather}
where $\lambda$ denotes the Lebesgue measure on $\reals$ and $\phi$ denotes a Lebesgue-density of a univariate standard normal distribution. We take $\mcmcKern{n}{\mu}(\path_{n}, \ccdot)$ to be a \gls{MH} kernel with proposal
\begin{align}
  \mcmcProposal{n}{\mu}(\path_n, \diff \pathAlt_n)
  & =
  \begin{dcases}
    \ProposalKernel(\state_1, \diff \stateAlt_1), & \text{if $n = 1$,}\\
    \frac{\PotentialProposal_{n-1}(\pathAlt_{n-1})}{\mu(\PotentialProposal_{n-1})} \mu(\diff \pathAlt_{n-1}) \ProposalKernel(\state_n, \diff \stateAlt_n), & \text{if $n > 1$,}
  \end{dcases}
\end{align}
where $\ProposalKernel$ is a Gaussian random-walk kernel on $\spaceState$ with transition density
\begin{align}
  \frac{\diff R(\state_{n}, \ccdot)}{\diff \lambda^{\otimes d}}(\stateAlt_n) = \prod_{i=1}^d \sqrt{d} \phi(\sqrt{d}[\stateAlt_{n,i} - \state_{n,i}]),
\end{align}
and where $\PotentialProposal_n(\path_n) = \obs_n(\state_n, y_n)$ for the \gls{MCMCBPF} as well as $\PotentialProposal_n \equiv 1$ for the \gls{MCMCFAAPF}.

For the \gls{MCMCBPF}, the \gls{MCMC} chains at each time step are initialised from stationarity, \IE{}
\begin{align}
 \kappa_n^\mu(\diff \path_n) = \mcmcTarget{n}{\mu}(\diff \path_n) = \dfrac{\mu(\diff \path_{n-1})\obs_{n-1}(\state_{n-1}, y_{n-1})}{\mu(\obs_{n-1}(\ccdot, y_{n-1}))} \Trans_n(\state_{n-1}, \diff \state_n),
\end{align}
as this is almost always possible, in practice. For the \gls{MCMCFAAPF}, the \gls{MCMC} chains are initialised by discarding the first $N_{\mathrm{burnin}} = 100$ samples as burn-in, \IE{}  $\kappa_n^\mu = [\mu \mathbin{\otimes} \Trans_n](\mcmcKern{n}{\mu})^{N_{\mathrm{burnin}}}$.

Figure~\ref{fig:linear_ssm} displays estimates of the marginal likelihood relative to the true marginal likelihood obtained from the (\gls{MCMC}-)\gls{BPF} and (\gls{MCMC}-)\gls{FAAPF}. In this case, the \gls{MCMCFAAPF} outperforms the \gls{BPF} both in dimension $d=1$ and $d=5$.

Note that Assumptions~\ref{as:ergodicity}--\ref{as:lipschitz} and \ref{as:stability_mcmc_kernels}--\ref{as:stability_potential} are violated in this example. The results therefore appear to lend some support the conjecture that these assumptions are stronger than necessary for the results of Propositions~\mbox{\ref{prop:lr_inequality}--\ref{prop:clt}} to hold.

\begin{figure}[h]
 \centering
 \includegraphics[scale=1.2, trim = 2cm 0.3cm 1cm 1.5cm]{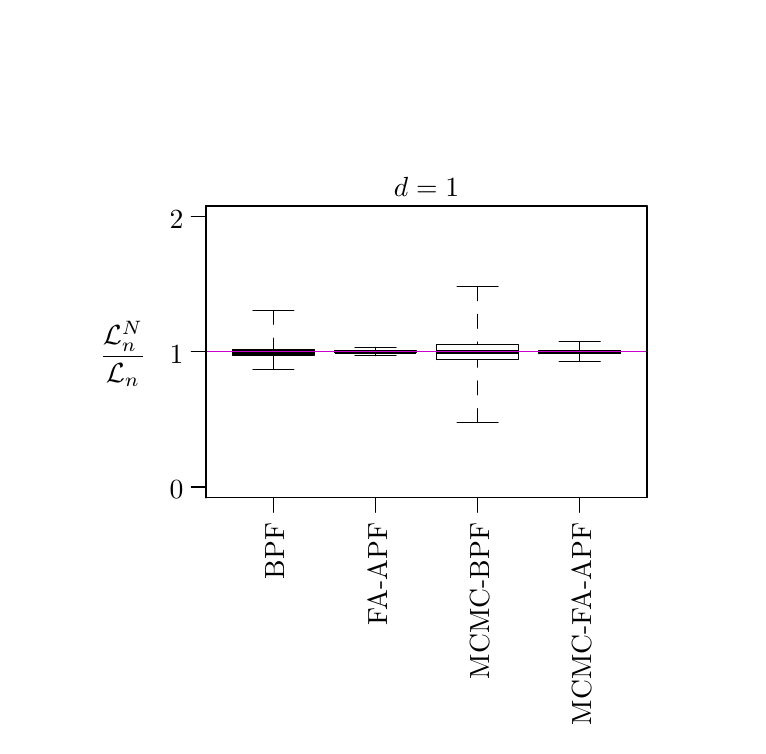}
 \includegraphics[scale=1.2, trim = 1.8cm 0.3cm 1.8cm 1.5cm]{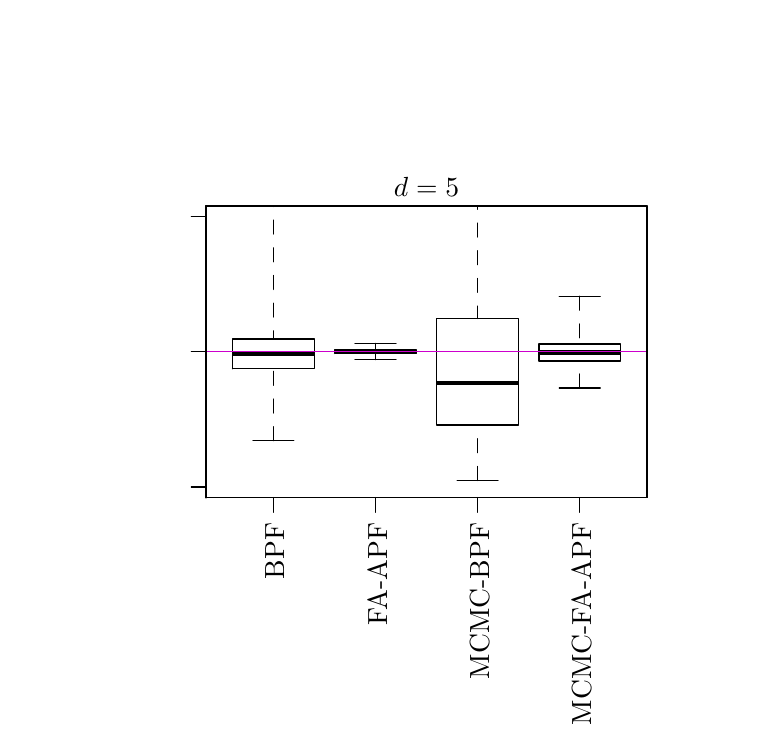}
 \caption{Relative estimates of the marginal likelihood $\calL_n$ in the linear Gaussian state-space model, generated by the algorithms discussed in Subsection~\ref{subsec:variance-variance_trade_off} using $N=10,000$ particles (with the \gls{MCMCFAAPF} using $N=10,000 - N_{\mathrm{burnin}}$ particles to compensate for the additional cost of generating the samples discarded as burn-in). Based on $1,000$ independent runs of each algorithm, each run using a different observation sequence of length $n=10$ sampled from the model. For the \gls{BPF} flow, $\calL_n = \normConst_{n+1} = \uTarget_n(\Potential_n)$ is estimated by $\calL_n^N \coloneqq \uTarget_n^N(\Potential_n)$; for the \gls{FAAPF} flow, $\calL_n = \normConst_n = \uTarget_n(\unitFun)$ is estimated by $\calL_n^N \coloneqq \normConst_n^N = \uTarget_n^N(\unitFun)$.}
 \label{fig:linear_ssm}
\end{figure}

\section{Conclusion}
\glsreset{PF}
\glsreset{MCMCPF}
\glsreset{FAAPF}
\glsreset{BPF}

In this work, we have established a \glsdesc{SLLN} and \glsdesc{CLT} for a class of algorithms known as sequential \gls{MCMC} methods or \glspl{MCMCPF} and provided conditions under which the associated errors can be controlled uniformly in time. When positive \gls{MCMC} operators are used within \glspl{MCMCPF}, the asymptotic variances of \gls{PF} estimators are always lower than the ones of the corresponding \gls{MCMCPF} estimators. However, even if the \gls{MCMC} kernels provide positively correlated draws, \glspl{MCMCPF} can remain of practical interest compared to \glspl{PF}. Indeed, there are many scenarios in which a sophisticated \gls{PF} such as the \gls{FAAPF} would significantly outperform a \gls{BPF} but cannot be implemented whereas the corresponding \gls{MCMCFAAPF} is essentially always applicable. If the \gls{MCMC} operators used within the \gls{MCMCFAAPF} are thus displaying ``reasonable'' \glsdesc{IACT}, the asymptotic variance of the resulting estimators can be smaller than the one of an implementable \gls{PF} such as the \gls{BPF}.

\section*{Acknowledgements}
This work was partially supported by funding from the Lloyd's Register Foun\-dation--Alan Turing Institute Programme on Data-Centric Engineering.

\renewcommand*{\bibfont}{\footnotesize}
\setlength{\bibsep}{3pt plus 0.3ex}
\bibliography{clt}

\appendix

\section{Martingale construction}
\label{app:martingale_construction}

In this section, we outline a number of useful martingale decompositions upon which the proofs of the $\mathbb{L}_r$-inequality in Proposition~\ref{prop:lr_inequality} and the  \gls{CLT} in Proposition~\ref{prop:clt} are based.

\subsection{Notation}

To simplify the notation, for any $n \geq 1$, we will hereafter often write
\begin{gather}
  \mcmcTarget{n}{N} \coloneqq \mcmcTarget{n}{\Target_{n-1}^N}, \quad \mcmcKern{n}{N} \coloneqq \mcmcKern{n}{\Target_{n-1}^N}, \quad \resolv{n}{N} \coloneqq \resolv{n}{\Target_{n-1}^N}, \quad \covarianceFunction{n}{N}  \coloneqq \covarianceFunction{n}{\Target_{n-1}^N},\\
  \mcmcTarget{n}{} \coloneqq \mcmcTarget{n}{\Target_{n-1}} (= \Target_n), \quad \mcmcKern{n}{} \coloneqq \mcmcKern{n}{\Target_{n-1}}, \quad \resolv{n}{} \coloneqq \resolv{n}{\Target_{n-1}}, \quad \covarianceFunction{n}{}  \coloneqq \covarianceFunction{n}{\Target_{n-1}}.
\end{gather}
Furthermore, we allow $\testFun \coloneqq (\testFun_n)_{n \geq 1}$ to denote a sequence of test functions where for any $n \geq 1$, $\testFun_n = (\testFun_n^u)_{1 \leq u \leq d} \in \boundedFunSet(\spacePath_n)^d$. For any $1 \leq u \leq d$, we also sometimes write $\testFun^u \coloneqq (\testFun_n^u)_{n \geq 1}$.

\subsection{Local errors}

We make use of the telescoping sum commonly used in the analysis of Feynman-Kac models, see \citep[see][Chapter~7]{delmoral2004feynman}:
\begin{align}
 \sqrt{N}[\Target_n^N- \Target_n](\testFun_n)
 & = \sqrt{N} \sum_{p=1}^n \mcmcTarget{p,n}{\Target_p^N}(\testFun_n) - \mcmcTarget{p,n}{\mcmcTarget{p}{N}}(\testFun_n) \label{eq:standard_telescoping-sum_decomposition}
\end{align}
with the convention that $\mcmcTarget{1}{\Target_{0}^N} = \Target_1$. Here, we have also defined $\mcmcTarget{p,n}{\mu}(\testFun_n) \coloneqq \mu Q_{p,n}(\testFun_n) /\mu Q_{p,n}(\unitFun)$. Key to the analysis are, therefore, the \emph{local errors}
\begin{align}
 \localErrorSingle_p^N(\testFun_p)
  \coloneqq \sqrt{N}[\Target_p^N - \mcmcTarget{p}{N}](\testFun_p)
  = [\widetilde{\localErrorSingle}_p^N+ R_p^N](\testFun_p),
\end{align}
where
\begin{align}
 R_p^N(\testFun_p)
 & \coloneqq \sqrt{N}[\Target_p^N - \tilde{\Target}_p^N](\testFun_p),\\
 \widetilde{\localErrorSingle}_p^N(\testFun_p)
 & \coloneqq \sqrt{N}[\tilde{\Target}_p^N - \mcmcTarget{p}{N}](\testFun_p), \label{eq:local_error_from_stationarity}
\end{align}
with $\tilde{\Target}_p^N(\testFun_p) \coloneqq \frac{1}{N}\sum_{i=1}^\nParticles \testFun_p(\ParticlePathStationary{p}{i})$. Here, for the purpose of facilitating the analysis, we have introduced the auxiliary Markov chain $\smash{(\ParticlePathStationary{p}{i})_{i \geq 1}}$ which evolves according to the same transition kernels as $\smash{(\ParticlePath{p}{i})_{i \geq 1}}$ but which is initialised from stationarity, \IE{} $\smash{\ParticlePathStationary{p}{1} \sim \mcmcTarget{p}{N}}$ and $\smash{\ParticlePathStationary{p}{i} \sim \mcmcKern{n}{N}(\ParticlePathStationary{p}{i-1}, \ccdot)}$, for $2 \leq i \leq N$. Note that $\smash{R_p^N(\testFun_p)}$ may therefore be viewed as the additional error introduced if the \gls{MCMC} chain is not initialised from stationarity at time~$p$. Tighter control of the errors could be obtained by explicitly coupling the two particle systems, but it is sufficient for our purposes to treat the two systems as being entirely independent. 

Using the tower property of conditional expectation it can be easily checked that $\E[\widetilde{\localErrorSingle}_p^N(\testFun_p)] = 0$ as in standard \glspl{PF}. However, contrary to standard \glspl{PF}, the particles $\smash{\ParticlePathStationary{p}{i}}$ and $\smash{\ParticlePathStationary{p}{j}}$, for $i \neq j$, are not necessarily conditionally independent given $\smash{\calF_{p-1}^{N,N}}$, where $\smash{\calF_{0}^{N,N} \coloneqq \{\emptyset, \Omega\}}$ and, for any $p \geq 1$ and $1 \leq k \leq N$,
\begin{equation}
 \calF_p^{N,k} \coloneqq \calF_{p-1}^{N,N} \vee \sigma(\ParticlePath{p}{i}, \ParticlePathStationary{p}{i} \mid  1 \leq i \leq k).  \label{eq:natural_filtration}
\end{equation}
Due to the lack of conditional independence, we obtain for any $1 \leq u \leq d$,
\begin{align}
 \!\!\!\!\!\!\!\E[\widetilde{\localErrorSingle}_p^N(\testFun_p^u)^2]
 & = \E\Bigl[\E\bigl(\widetilde{\localErrorSingle}_p^N(\testFun_p^u)^2\big|\calF_{p-1}^{N,N}\bigr)\Bigr] \label{eq:mcmc_finite-sample-variance}\\
 & = \E\Bigl[\mcmcTarget{p}{N}\bigl[(\testFun_p^u- \mcmcTarget{p}{N}(\testFun_p^u))^2\bigr]\Bigr]\\
 & \quad + 2 \sum_{j=1}^{N-1}\biggl(1-\frac{j}{N}\biggr) \E\Bigl[\mcmcTarget{p}{N}\bigl[(\testFun_p^u- \mcmcTarget{p}{N}(\testFun_p^u))(\mcmcKern{p}{N})^j(\testFun_p^u- \mcmcTarget{p}{N}(\testFun_p^u))\bigr]\Bigr].\!\!\!\!\!\!\!\! \label{eq:variance_local_error}
\end{align}
To see this, note that (conditional on $\smash{\calF_{p-1}^{N,N}}$) the inner expectation in \eqref{eq:mcmc_finite-sample-variance} is simply the variance of $\smash{ N^{-1/2} \sum_{i=1}^N h(\mathbf{Y}^i)}$, where $\smash{h(\mathbf{y}) \coloneqq \testFun_p^u(\mathbf{y}) - \mcmcTarget{p}{N}(\testFun_p^u)}$ and where $(\mathbf{Y}^i)_{i \geq 1}$ is a stationary Markov chain with invariant distribution $\mcmcTarget{p}{N}$ and transition kernels $\smash{\mcmcKern{p}{N}}$. The last line then follows by exploiting stationarity of that Markov chain and grouping equivalent terms. This is a generalisation, when $\mcmcKern{p}{\mu}$ is not perfectly mixing, of the result for a standard \gls{PF} in which $\smash{\mcmcKern{p}{\mu}(\path_p, \ccdot) = \mcmcTarget{p}{\mu} = \mcmcInit{p}{\mu}}$, for all $\path_p \in \spacePath_p$, in which case:
\begin{align}
 \E[\widetilde{\localErrorSingle}_p^N(\testFun_p^u)^2] = \E[\localErrorSingle_p^N(\testFun_p^u)^2] = \E[\mcmcTarget{p}{N}([\testFun_p^u- \mcmcTarget{p}{N}(\testFun_p^u)]^2)].
\end{align}

Following \citet{delmoral2010interacting, bercu2012fluctuations}, we further decompose \eqref{eq:local_error_from_stationarity} as
\begin{align}
 \widetilde{\localErrorSingle}_p^N(\testFun_p)
 & = \frac{1}{\sqrt{N}} \sum_{i=1}^N \bigl[\testFun_p(\ParticlePathStationary{p}{i}) - \mcmcTarget{p}{N}(\testFun_p)\bigr]\\
 & = \frac{1}{\sqrt{N}}\sum_{i=1}^{\smash{N}\vphantom{.}} \bigl[\resolv{p}{N}(\testFun_p)(\ParticlePathStationary{p}{i}) - \mcmcKern{p}{N}\resolv{p}{N}(\testFun_p)(\ParticlePathStationary{p}{i})\bigr] \quad \text{[by \eqref{eq:poisson_equation_1}]}\\
 & = \martingaleSingle_p^N(\testFun_p) + \remainderSingle_p^{N}(\testFun_p), \label{eq:martingale_single_decomposition}
\end{align}
where, letting $\smash{\ParticlePathStationary{p}{N+1}}$ be a random variable distributed, independently conditional upon $\smash{\calF_p^{N,N}}$, according to $\smash{\mcmcKern{p}{N}(\ParticlePathStationary{p}{N}, \ccdot)}$, we have (weakly) defined
\begin{align}
 \martingaleSingle_p^N(\testFun_p)
 & \coloneqq
 \frac{1}{\sqrt{N}}\sum_{i=1}^N \bigl[\resolv{p}{N}(\testFun_p)(\ParticlePathStationary{p}{i+1}) - \mcmcKern{p}{N}\resolv{p}{N}(\testFun_p)(\ParticlePathStationary{p}{i})\bigr],\\
 \remainderSingle_p^{N}(\testFun_p)
 & \coloneqq \frac{1}{\sqrt{N}} \bigl[\resolv{p}{N}(\testFun_p)(\ParticlePathStationary{p}{1}) - \resolv{p}{N}(\testFun_p)(\ParticlePathStationary{p}{N+1})\bigr].
\end{align}
This allows us to write the local error at time~$p$ as
\begin{align}
 \localErrorSingle_p^N
 & = \martingaleSingle_p^N + \remainderSingle_p^{N} + R_p^N.
 \label{eq:local_error_decomposition_full}
\end{align}

\subsection{Martingale approximation at time $p$}\label{subsec:martingaledecomposition}

For $1 \leq p \leq n$, let $\calF_p^N \coloneqq (\calF_p^{N,i})_{0 \leq i \leq N}$, where $\calF_p^{N,i}$ is defined as in \eqref{eq:natural_filtration} with the additional convention that $\smash{\calF_p^{N,0} = \calF_{p-1}^{N,N}}$. We now show that $\smash{\martingaleSingle_p^N(\testFun_p)}$ is a martingale while $\smash{\remainderSingle_p^{N}(\testFun_p)}$ and $\smash{R_p^N(\testFun_p)}$ are remainder terms which vanish almost surely as $N \to \infty$.
\begin{itemize}
  \item \textbf{Martingale.} For each $N\geq 1$, $\smash{\martingaleSingle_p^N(\testFun_p) \coloneqq \sum_{i=1}^N\martingaleDiffSingle_p^{N,i+1}(\testFun_p)}$ is the terminal value of a martingale (and these martingales form a triangular array) which is defined through the $\smash{\calF_p^N}$-martingale difference sequence $\smash{(\martingaleDiffSingle_p^{N,i+1}(\testFun_p))_{1 \leq i \leq N}}$, where
  \begin{align}
  \martingaleDiffSingle_p^{N,i+1}(\testFun_p)
  \coloneqq \frac{1}{\sqrt{N}}\bigl[\resolv{p}{N}(\testFun_p)(\ParticlePathStationary{p}{i+1}) - \mcmcKern{p}{N}\resolv{p}{N}(\testFun_p)(\ParticlePathStationary{p}{i})\bigr].
  \end{align}
  Indeed, for any $1 \leq i \leq N$ and any $1 \leq u,v \leq d$,
  \begin{align}
  \E\bigl[\martingaleDiffSingle_p^{N,i+1}(\testFun_p^u)\big|\calF_p^{N,i}\bigr] & = 0, \label{eq:martingale_expectation_time_p}\\
  \E\bigl[\martingaleDiffSingle_p^{N,i+1}(\testFun_p^u)\martingaleDiffSingle_p^{N,i+1}(\testFun_p^v)\big|\calF_p^{N,i}\bigr] & = \frac{1}{N}\covarianceFunction{p}{N}(\testFun_p^u, \testFun_p^v)(\ParticlePathStationary{p}{i}), \label{eq:martingale_covariance_time_p}
  \end{align}
  where the second line follows directly from the definition in \eqref{eq:definition_of_covFun}.

 \item \textbf{Remainder.}  By \eqref{eq:bound_on_resolvent}, for any $1 \leq u \leq d$, the remainder signed measure $\remainderSingle_p^{N}$ is bounded as
%
  \begin{align}
   \lvert \remainderSingle_p^{N}(\testFun_p^u) \rvert
   \leq \frac{2}{\sqrt{N}} \lvert \resolv{p}{N}(\testFun_p^u) \rvert
   \leq \frac{2 \boundResolvent{p} \lVert \testFun_p^u \rVert}{\sqrt{N}}. \label{eq:bound_on_first_remainder_measure}
  \end{align}


   By the same arguments as in \eqref{eq:bound_resolvent}, under Assumption~\ref{as:ergodicity}, for any $N, r \in \naturals$ and any $1 \leq u \leq d$, we have
  \begin{align}
    \E\bigl[ \lvert R_p^N(\testFun_p^u)\rvert^r \bigr]^{\frac{1}{r}} \leq \frac{\boundResolvent{p}\lVert \testFun_p^u \rVert}{\sqrt{N}}. \label{eq:bound_on_second_remainder_measure}
  \end{align}
  Note that by Markov's inequality and the Borel--Cantelli lemma, \eqref{eq:bound_on_second_remainder_measure} implies that $R_p^N(\testFun_p^u) \convergesAlmostSurely 0$ as $N \to \infty$.

\end{itemize}

\subsection{Martingale approximation up to time $n$}

The sum of all local errors up time~$n$ is given by
\begin{align}
 \localError_n^N(\testFun)
 & \coloneqq \sum_{p=1}^n \localErrorSingle_p^N(\testFun_p)
 = \martingale_n^N(\testFun) + \remainder_n^N(\testFun) + \calR_n^N(\testFun).
\end{align}
The three quantities appearing on the right hand side are defined as follows.
\begin{itemize}
  \item \textbf{Martingale.} Let $\smash{\calF^N \coloneqq (\calF_n^N)_{n \geq 1}}$ with $\smash{\calF_p^N \coloneqq \calF_p^{N,N}}$, then the terms
  \begin{align}
  \martingale_n^N(\testFun)
  \coloneqq \sum_{p=1}^n \martingaleSingle_p^N(\testFun_p), 
  \end{align}
  define an $\smash{\calF^N}$-martingale $\smash{(\martingale_n^N(\testFun))_{n \geq 1}}$. 

 \item \textbf{Remainder.}  Again,
 \begin{equation}
  \remainder_n^N(\testFun) \coloneqq \sum_{p=1}^n \remainderSingle_p^N(\testFun_p) \quad \text{and} \quad \calR_n^N(\testFun) \coloneqq \sum_{p=1}^n R_p^N(\testFun_p)
 \end{equation}
 constitute remainder terms. Note that for any $n \geq 1$ and any $1 \leq u \leq d$, by \eqref{eq:bound_on_first_remainder_measure} and \eqref{eq:bound_on_second_remainder_measure},
  \begin{equation}
   \lim_{N \to \infty} \lVert \remainder_n^N(\testFun^u) \rVert = 0 \quad \text{and} \quad \lvert \calR_n^N(\testFun^u)\rvert \convergesAlmostSurely 0.
  \end{equation}
\end{itemize}

\section{Convergence proofs}
\label{app:proofs}

\subsection{Auxiliary results needed for time-uniform bounds}

For any $1 \leq p \leq n$ and any $\testFun_n \in \boundedFunSet(\spacePath_n)$, we define
\begin{gather}
 r_{p,n}
 \coloneqq \sup_{\path_p, \pathAlt_p \in \spacePath_p} \frac{Q_{p,n}(\unitFun)(\path_p)}{Q_{p,n}(\unitFun)(\pathAlt_p)}, \quad 
 P_{p,n}(\testFun_n)
 \coloneqq \frac{Q_{p,n}(\testFun_n)}{Q_{p,n}(\unitFun)} \leq 1.
\end{gather}

In the remainder of this work, whenever we restrict our analysis to test functions $\testFun_n \in \smash{\boundedFunSet^\star(\spacePath_n)^d}$, we can replace $\beta(P_{p,n})$ by
\begin{align}
 \beta^\star(P_{p,n}) \coloneqq \sup\lvert P_{p,n}(\testFun_n')(\path_p) - P_{p,n}(\testFun_n')(\pathAlt_p)\rvert,
\end{align}
where the supremum is over all $\path_p, \pathAlt_p \in \spacePath_p$ and all $\testFun_n' \in \smash{\boundedFunSet^\star(\spacePath_n)}$ such that $\osc(\testFun_n') \leq 1$.

\begin{lemma}\label{lem:time-uniform_bounds}
 Under Assumptions~\ref{as:ergodicity} and  \ref{as:stability_mcmc_kernels}--\ref{as:stability_potential}, for any $1 \leq p \leq n$,
 \begin{align}
 \boundResolvent{p}
 & \leq  \boundResolventUniform, \quad \text{where} \quad \boundResolventUniform \coloneqq 2 \bar{\imath} / \varepsilon(K) < \infty,\\
 r_{p,n}
 & \leq \bar{r}, \quad \text{where} \quad \bar{r} \coloneqq \varepsilon(G)^{-(m+l)} \varepsilon(M)^{-1} < \infty,\\
 \beta^\star(P_{p,n})
 & \leq \bar{\beta}^{\lfloor (n-p)/m \rfloor}, \quad \text{where} \quad  \bar{\beta} \coloneqq (1 - \varepsilon(G)^{m+l} \varepsilon(\Mutation)^2) <1.
\end{align}
\end{lemma}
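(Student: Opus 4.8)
The plan is to prove the three estimates in turn: the resolvent bound is immediate, the ratio bound follows from the mixing conditions by elementary manipulations, and the contraction bound — the geometric stability of the normalised semigroup — is where the work lies.

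\emph{Resolvent bound.} This is just Assumption~\ref{as:stability_mcmc_kernels} inserted into \eqref{eq:bound_resolvent}: since $i_p \leq \bar{\imath}$ and $\varepsilon_p(K) \geq \varepsilon(K)$ for every $p$, one gets $\boundResolvent{p} = 2 i_p/\varepsilon_p(K) \leq 2\bar{\imath}/\varepsilon(K) = \boundResolventUniform < \infty$.

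\emph{Ratio bound.} Write $Q_{p,n}(\unitFun)(\path_p) = \E[\prod_{q=p}^{n-1}\Potential_q(\Path_q) \mid \Path_p = \path_p]$, the expectation being over the $\Mutation$-chain started from $\path_p$. By the second part of Assumption~\ref{as:stability_potential}, $\varepsilon(\Potential)\lVert\Potential_q\rVert \leq \Potential_q \leq \lVert\Potential_q\rVert$ pointwise, so when $n - p \leq m + l$ we immediately get $Q_{p,n}(\unitFun)(\path_p)\in[\varepsilon(\Potential)^{n-p},1]\cdot\prod_{q=p}^{n-1}\lVert\Potential_q\rVert$ and hence $r_{p,n}\leq\varepsilon(\Potential)^{-(n-p)}\leq\bar{r}$. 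When $n - p > m + l$, peel off the first $m$ time steps, $Q_{p,n}(\unitFun)(\path_p) = \E[\prod_{q=p}^{p+m-1}\Potential_q(\Path_q)\, Q_{p+m,n}(\unitFun)(\Path_{p+m}) \mid \Path_p = \path_p]$, bound the leading product between $\varepsilon(\Potential)^m$ and $1$ times $\prod_{q=p}^{p+m-1}\lVert\Potential_q\rVert$, and control the remaining dependence on $\path_p$ — which enters only through the transient potentials $\Potential_p,\dots,\Potential_{p+l}$ seen through the $(l{+}1)$-coordinate windows of Assumption~\ref{as:stability_potential} and through the law of $\Path_{p+m}$ — by the $m$-step mutation minorisation of Assumption~\ref{as:stability_mutation}, which lets one compare $\E_{\path_p}[\,\cdot\,]$ with $\E_{\pathAlt_p}[\,\cdot\,]$ at the cost of a factor $\varepsilon(\Mutation)^{-1}$. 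Collecting the factors yields $r_{p,n}\leq\varepsilon(\Potential)^{-(m+l)}\varepsilon(\Mutation)^{-1}=\bar{r}$.

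\emph{Contraction bound.} Here I would follow the classical route for the stability of Feynman--Kac semigroups, adapted to the path-space (historical) setting. Using $Q_{p,n} = Q_{p,p+m}Q_{p+m,n}$, split the comparison of $P_{p,n}(\testFun_n')(\path_p)$ and $P_{p,n}(\testFun_n')(\pathAlt_p)$, for $\testFun_n'\in\boundedFunSet^\star(\spacePath_n)$ with $\osc(\testFun_n')\leq1$, into $\lfloor(n-p)/m\rfloor$ full blocks of length $m$, the residual block of length $<m$ contributing a factor $\leq1$. On each block one establishes a contraction of the final-coordinate marginal of the normalised flow by $\bar{\beta}=1-\varepsilon(\Potential)^{m+l}\varepsilon(\Mutation)^2$: the $m$ potentials in $Q_{p,p+m}$ are bounded below by $\varepsilon(\Potential)^m$ times their suprema (Assumption~\ref{as:stability_potential}); the terminal weight $Q_{p+m,n}(\unitFun)$ is, by the ratio bound just proved together with the $(l{+}1)$-coordinate memory of Assumption~\ref{as:stability_potential}, controlled relative to a function of the last $l+1$ coordinates; and the $m$-step minorisation of Assumption~\ref{as:stability_mutation} is applied (forward inside the block, and once more to strip the dependence of the terminal weight on the pre-block coordinates). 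These factors combine into $\bar{\beta}$, and submultiplicativity across the blocks — at the level of the relevant coordinate marginals — gives $\beta^\star(P_{p,n})\leq\bar{\beta}^{\lfloor(n-p)/m\rfloor}$.

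\emph{Main obstacle.} The hard part is this one-block contraction. Because the model lives on path space and the $\Mutation$-kernels may depend on the whole path, the Dobrushin coefficient of the naive path-space transition kernel is $1$ (the output retains its first $p$ coordinates), so the minorisations of Assumptions~\ref{as:stability_mutation}--\ref{as:stability_potential} must be localised to the coordinate windows actually probed — the final coordinate for the test functions and $(l{+}1)$-coordinate windows for the potentials — while carefully tracking the Feynman--Kac normalisation. Verifying that the resulting constants then combine into exactly $\varepsilon(\Potential)^{m+l}\varepsilon(\Mutation)^2$, uniformly in $n\geq p+m$, rather than some weaker power, is the bookkeeping-heavy core of the argument.
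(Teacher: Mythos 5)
Your overall route is the one the paper itself gestures at: its entire proof of Lemma~\ref{lem:time-uniform_bounds} is a one-line appeal to ``similar arguments'' to \citet[Proposition~4.3.3]{delmoral2004feynman}, i.e.\ precisely the crude-potential-bound-plus-block-minorisation scheme you outline. Your first estimate is complete and correct ($\boundResolvent{p} = 2 i_p/\varepsilon_p(K) \leq 2\bar{\imath}/\varepsilon(K)$ by \ref{as:stability_mcmc_kernels} inserted into \eqref{eq:bound_resolvent}), as is the case $n-p \leq m+l$ of the ratio bound, and your constants would come out exactly as claimed if the decisive reduction step were available.

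That step, however, is asserted rather than proved, and as stated it is not justified in the generality of the lemma. In the ratio bound you claim that, after peeling off the leading potentials, the remaining dependence on $\path_p$ enters ``only through the transient potentials $\Potential_p,\dotsc,\Potential_{p+l}$ \ldots and through the law of $\Path_{p+m}$'', so that a single application of \ref{as:stability_mutation} at cost $\varepsilon(\Mutation)^{-1}$ finishes the job. But Assumption~\ref{as:stability_mutation} only compares the two block laws through test functions $\varphi \in \boundedFunSet(\spaceState)$ of the \emph{final} coordinate $\state_{p+m}$, whereas the quantity left after peeling, e.g.\ $\E[\,Q_{p+m+l,n}(\unitFun)(\Path_{p+m+l}) \mid \Path_p = \path_p]$, is the expectation of a functional of the whole continuation path: every subsequent kernel $\Mutation_q(\path_{q-1},\ccdot)$, $q > p+m$, depends on the full history including $\path_p$, so the dependence on the starting path does not factor through the law of $\state_{p+m}$ alone. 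If the mutations depend only on the final coordinate (as in the state-space examples), peeling $m+l$ potentials does reduce the remainder to a function of $\state_{p+m}$ and one application of \ref{as:stability_mutation} yields $\bar{r} = \varepsilon(\Potential)^{-(m+l)}\varepsilon(\Mutation)^{-1}$; in the path-dependent case this reduction is the missing argument, not a bookkeeping detail. The same issue makes your one-block contraction and the claimed ``submultiplicativity at the level of the relevant coordinate marginals'' non-automatic: the normalised block operators do not map $\boundedFunSet^\star$ into $\boundedFunSet^\star$, so $\beta^\star(P_{p,n})$ is not obviously submultiplicative, and the per-block factor $1-\varepsilon(\Potential)^{m+l}\varepsilon(\Mutation)^2$ has to be extracted by an explicit Dobrushin-type estimate. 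Your ``main obstacle'' paragraph names this difficulty honestly, but deferring it means the proposal fully establishes only the first of the three bounds — though, in fairness, the paper's own proof defers exactly the same work to \citet{delmoral2004feynman}.
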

\begin{proof}
 This follows by similar arguments to those used in the proof of \citet[][Proposition~4.3.3]{delmoral2004feynman}. \hfill \ensuremath{_\Box}
\end{proof}

\subsection{Auxiliary results needed for the SLLN}

\begin{lemma}\label{lem:lr_inequality_for_martingale_u}
 Under Assumption~\ref{as:ergodicity}, for any $r \geq 1$, there exists $\boundExponent{r} < \infty$ such that for any $n \geq 1$, any $\testFun_n \in \boundedFunSet(\spacePath_n)$ and any $N \in \naturals$,
 \begin{equation}
   \E\bigl[ \lvert U_n^N(\testFun_n) \rvert^r \bigr]^{\frac{1}{r}} \leq 2 \boundExponent{r} \boundResolvent{n}  \lVert \testFun_n\rVert.
 \end{equation}
\end{lemma}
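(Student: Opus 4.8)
The plan is to apply a Burkholder--Davis--Gundy-type inequality to the martingale $U_p^N(\testFun_p) = \sum_{i=1}^N \varDelta U_p^{N,i+1}(\testFun_p)$ whose structure was set up in Appendix~\ref{app:martingale_construction}. Concretely, since $(\martingaleDiffSingle_p^{N,i+1}(\testFun_p))_{1 \leq i \leq N}$ is an $\calF_p^N$-martingale difference sequence by \eqref{eq:martingale_expectation_time_p}, a discrete-time martingale moment inequality (e.g.\ the Burkholder inequality, or the version for martingales with bounded increments stated in \citet[][Chapter~7]{delmoral2004feynman}) gives, for each even integer $r \geq 2$, a universal constant $b_r < \infty$ with
\begin{equation*}
 \E\bigl[ \lvert U_p^N(\testFun_p) \rvert^r \bigr]^{\frac{1}{r}}
 \leq b_r \, \E\Bigl[ \Bigl( \sum_{i=1}^N \E\bigl[ \martingaleDiffSingle_p^{N,i+1}(\testFun_p)^2 \,\big|\, \calF_p^{N,i} \bigr] \Bigr)^{r/2} \Bigr]^{\frac{1}{r}}
 + b_r \, \Bigl( \sum_{i=1}^N \E\bigl[ \lvert \martingaleDiffSingle_p^{N,i+1}(\testFun_p) \rvert^r \bigr] \Bigr)^{\frac{1}{r}}.
\end{equation*}
(For non-integer $r$ one either interpolates or simply takes the next even integer and uses monotonicity of $\mathbb{L}_r$-norms; it suffices to absorb this into $b_r$.) The first main step is therefore to control the predictable quadratic variation using \eqref{eq:martingale_covariance_time_p}: $\sum_{i=1}^N \E[\martingaleDiffSingle_p^{N,i+1}(\testFun_p)^2 \mid \calF_p^{N,i}] = \frac{1}{N}\sum_{i=1}^N \covarianceFunction{p}{N}(\testFun_p,\testFun_p)(\ParticlePathStationary{p}{i})$, which by the uniform bound \eqref{eq:boundedness_of_covFun} on the covariance function is deterministically at most $4\boundResolvent{p}^2 \lVert \testFun_p \rVert^2$. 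Hence the bracket term contributes at most $2 \boundResolvent{p} \lVert \testFun_p \rVert$.

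The second step handles the individual-increment term. From the definition of $\martingaleDiffSingle_p^{N,i+1}$ together with \eqref{eq:bound_on_resolvent}, each increment satisfies $\lvert \martingaleDiffSingle_p^{N,i+1}(\testFun_p) \rvert \leq \frac{2}{\sqrt N} \lVert \resolv{p}{N}(\testFun_p) \rVert \leq \frac{2 \boundResolvent{p} \lVert \testFun_p \rVert}{\sqrt N}$ almost surely, so $\sum_{i=1}^N \E[\lvert \martingaleDiffSingle_p^{N,i+1}(\testFun_p)\rvert^r] \leq N \cdot (2\boundResolvent{p}\lVert\testFun_p\rVert)^r N^{-r/2} = (2\boundResolvent{p}\lVert\testFun_p\rVert)^r N^{1-r/2}$, whose $r$-th root is $2\boundResolvent{p}\lVert\testFun_p\rVert N^{1/r - 1/2} \leq 2\boundResolvent{p}\lVert\testFun_p\rVert$ for $r \geq 2$, $N \geq 1$. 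Combining the two contributions and renaming the constant yields $\E[\lvert U_p^N(\testFun_p)\rvert^r]^{1/r} \leq 2 b_r \boundResolvent{p} \lVert\testFun_p\rVert$; adjusting $b_r$ once more absorbs the factor of $2$ and gives the stated bound with the same symbol $\boundExponent{r}$ appearing in Proposition~\ref{prop:lr_inequality}. It is worth noting that both estimates are uniform in $N$ and that the resolvent bound \eqref{eq:bound_on_resolvent} is the only place Assumption~\ref{as:ergodicity} enters.

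The main obstacle, such as it is, is bookkeeping rather than conceptual: one must be slightly careful that the conditional-variance identity \eqref{eq:martingale_covariance_time_p} is being summed against the correct filtration $\calF_p^{N,i}$ (so that the predictable bracket, not the optional one, appears) and that the martingale moment inequality invoked is the conditional (predictable) version valid for triangular arrays with increments that are not identically distributed. One should also state explicitly that the case of general real $r \geq 1$ follows from the even-integer case by Lyapunov's inequality, so that the lemma holds as stated for all $r \geq 1$. No delicate limiting argument is required here — this lemma is purely a finite-$N$ moment bound that will later feed, via the decompositions in Appendix~\ref{app:martingale_construction} and the stability estimates of Lemma~\ref{lem:time-uniform_bounds}, into the proof of Proposition~\ref{prop:lr_inequality}.
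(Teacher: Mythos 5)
Your proposal is correct and follows essentially the same route as the paper: bound the predictable quadratic variation of $U_n^N(\testFun_n)$ by $4\boundResolvent{n}^2\lVert\testFun_n\rVert^2$ via \eqref{eq:martingale_covariance_time_p} and \eqref{eq:boundedness_of_covFun}, then invoke a Burkholder-type martingale moment inequality. Your additional Rosenthal-style increment term, controlled by the almost-sure bound $2\boundResolvent{n}\lVert\testFun_n\rVert/\sqrt{N}$ on each increment, is a harmless refinement of the same argument rather than a different approach.
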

\begin{proof}
 Without loss of generality, assume that $\lVert \testFun_n \rVert \leq 1$. The quadratic variation associated with the martingale $U_n^N(\testFun_n)$ satisfies
  \begin{align}
  \sum_{\smash{i=1}}^N \E\bigl(\varDelta U_n^{N,i+1}(\testFun_n)^2\big|\calF_n^{N,i}\bigr)
  & = \tilde{\Target}_n^N \covarianceFunction{n}{N}(\testFun_n,\testFun_n) \quad \text{[by \eqref{eq:martingale_covariance_time_p}]}\\
  & \leq 4 \smash{\boundResolvent{n}^{2}}. \quad \text{[by \eqref{eq:boundedness_of_covFun}]}
  \label{eq:boundquadraticvariation}
 \end{align}
 Hence, by the Burkholder-Davis-Gundy inequality \citep[Theorem 17.7]{kallenberg2006foundations} there exists $\boundExponent{r} < \infty$ such that
 \begin{align}
  \E\bigl[ \lvert U_n^N(\testFun_n) \rvert^r \bigr]^{\frac{1}{r}} \leq 2 \boundExponent{r} \boundResolvent{n}.
 \end{align}
 This completes the proof. \hfill \ensuremath{_\Box}
\end{proof}

We are now ready to prove the $\mathbb{L}_r$-inequality in Proposition~\ref{prop:lr_inequality}.

\begin{proof}[of Proposition~\ref{prop:lr_inequality}]
 Without loss of generality, assume that $\lVert\testFun_n\rVert \leq 1$ for all $n \geq 1$ and that the constants $\boundExponent{r}$ in Lemma~\ref{lem:lr_inequality_for_martingale_u} satisfy $\inf_{r \geq 1} b_r \geq 1$. 
 
 We begin by proving the first part of the proposition, \IE{} the $\mathbb{L}_r$-error bound without the additional Assumptions~\ref{as:stability_mcmc_kernels}--\ref{as:stability_potential}. The proof proceeds by induction on $n$. At time~$n=1$, by Minkowski's inequality combined with Lemma~\ref{lem:lr_inequality_for_martingale_u} as well as \eqref{eq:bound_on_first_remainder_measure} and \eqref{eq:bound_on_second_remainder_measure}, we have
 \begin{align}
  \sqrt{N} \E\bigl[\lvert [\Target_1^N - \Target_1](\testFun_1) \rvert^r \bigr]^{\frac{1}{r}}
  & = \sqrt{N} \E\bigl[\lvert [\Target_1^N - \mcmcTarget{1}{N}](\testFun_1) \rvert^r \bigr]^{\frac{1}{r}}\\
  & \leq \E\bigl[\lvert U_1^N(\testFun_1) \rvert^r \bigr]^{\frac{1}{r}} + \E\bigl[\lvert L_1^N (\testFun_1) \rvert^r \bigr]^{\frac{1}{r}} + \E\bigl[\lvert R_1^N (\testFun_1) \rvert^r \bigr]^{\frac{1}{r}}\\
  & \leq 2 \boundExponent{r} \boundResolvent{1} + \frac{3 \boundResolvent{1}}{\sqrt{N}} \leq a_1  \boundExponent{r},
 \end{align}
 \EG\@ with $a_1 \coloneqq 5 \boundResolvent{1} < \infty$.
 
 Assume now that the first part of the proposition holds at time~$n-1$, for some $n > 1$. By Minkowski's inequality, 
 \begin{align}
  \MoveEqLeft \sqrt{N} \E\bigl[\lvert [\Target_n^N - \Target_n](\testFun_n) \rvert^r \bigr]^{\frac{1}{r}}\\
  & \leq \sqrt{N} \E\bigl[\lvert [\Target_n^N - \mcmcTarget{n}{N}](\testFun_n) \rvert^r \bigr]^{\frac{1}{r}} + \sqrt{N} \E\bigl[\lvert [\mcmcTarget{n}{N} - \Target_n](\testFun_n) \rvert^r \bigr]^{\frac{1}{r}} \label{eq:lr_error_induction_proof_bound:1}\\
  & \leq 2 \boundExponent{r} \boundResolvent{n} + \frac{3 \boundResolvent{n}}{\sqrt{N}} + \frac{2 \boundExponent{r} a_{n-1}}{\Target_{n-1}(\Potential_{n-1})} \label{eq:lr_error_induction_proof_bound:2}  \leq a_n \boundExponent{r},
 \end{align}
 \EG\@ with $a_n \coloneqq 5 \boundResolvent{n} + 2 a_{n-1} /\Target_{n-1}(\Potential_{n-1}) < \infty$. Here, the bound on the first term in \eqref{eq:lr_error_induction_proof_bound:1} follows by the same arguments as at time~$1$. The bound on the second term in \eqref{eq:lr_error_induction_proof_bound:1} follows from the following decomposition (note that $\Target_{n-1}(Q_n(\unitFun)) = \Target_{n-1}(\Potential_{n-1})$):
 \begin{align}
  \MoveEqLeft \Target_{n-1}(\Potential_{n-1}) \lvert [\mcmcTarget{n}{N} - \Target_n](\testFun_n) \rvert\\
  & = \bigl\lvert \Target_{n-1}(\Potential_{n-1}) \mcmcTarget{n}{N}(\testFun_n) - \Target_{n-1}^N(Q_n(\testFun_n)) + \Target_{n-1}^N(Q_n(\testFun_n)) - \Target_{n-1}(Q_n(\testFun_n))  \bigr\rvert\\
  & =  \bigl\lvert \mcmcTarget{n}{N}(\testFun_n) [\Target_{n-1} - \Target_{n-1}^N](Q_n(\unitFun)) + [\Target_{n-1}^N - \Target_{n-1}](Q_n(\testFun_n))  \bigr\rvert\\
  & \leq  \lVert \mcmcTarget{n}{N}(\testFun_n) \rVert  \lvert [\Target_{n-1} - \Target_{n-1}^N](Q_n(\unitFun))\rvert + \lvert [\Target_{n-1}^N - \Target_{n-1}](Q_n(\testFun_n)) \rvert.
 \end{align}
 Minkowski's inequality along with $\lVert Q_n(\unitFun)\rVert \leq 1$, $\lVert Q_n(\testFun_n) \rVert \leq 1$ and $\lVert \mcmcTarget{n}{N}(\testFun_n) \rVert \leq \lVert \testFun_n\rVert \leq 1$ combined with the induction assumption then readily yields the bound given in \eqref{eq:lr_error_induction_proof_bound:2}, \IE\@
 \begin{align}
  \sqrt{N} \E\bigl[\lvert [\mcmcTarget{n}{N} - \Target_n](\testFun_n) \rvert^r \bigr]^{\frac{1}{r}}
  \leq \frac{2 \boundExponent{r} a_{n-1}}{\Target_{n-1}(\Potential_{n-1})}.
 \end{align}
 This completes the first part of the proposition.
 
 As the bounds obtained through the previous induction proof cannot easily be made time-uniform, we prove the second part of the proposition via the more conventional telescoping-sum decomposition given in \eqref{eq:standard_telescoping-sum_decomposition}. Using the arguments in \citet[][pp.~244--246]{delmoral2004feynman}, we obtain the following bound for the $p$th term in the telescoping sum:
 \begin{align}
  \sqrt{N}\lvert [\mcmcTarget{p,n}{\Target_p^N} - \mcmcTarget{p,n}{\mcmcTarget{p}{N}}](\testFun_n) \rvert
  & \leq 2 \sqrt{N} \lvert [\Target_p^N - \mcmcTarget{p}{N}](\widebar{Q}_{p,n}^N(\testFun_n)) \rvert
  r_{p,n} 
  \beta(P_{p,n})\\
  & = 2 \lvert
  [
  U_p^N + L_p^N + R_p^N
  ]
  (\widebar{Q}_{p,n}^N(\testFun_n)) \rvert 
  r_{p,n}  
  \beta(P_{p,n}),
 \end{align}
where the second line is due to \eqref{eq:local_error_decomposition_full} and where
\begin{gather}
 \widebar{Q}_{p,n}^N(\testFun_n)
  \coloneqq \frac{Q_{p,n}^N(\testFun_n)}{\lVert Q_{p,n}^N(\testFun_n) \rVert},\\
 Q_{p,n}^N(\testFun_n)
  \coloneqq \frac{Q_{p,n}(\unitFun)}{\mcmcTarget{p}{N}(Q_{p,n}(\unitFun))} P_{p,n}\biggl(\testFun_n - \frac{\mcmcTarget{p}{N}(Q_{p,n}(\testFun_n))}{\mcmcTarget{p}{N}(Q_{p,n}(\unitFun))}\biggr).
\end{gather}
Hence, by \eqref{eq:standard_telescoping-sum_decomposition},
\begin{align}
  \MoveEqLeft \sqrt{N} \E\bigl[\lvert [\Target_n^N - \Target_n](\testFun_n) \rvert^r\bigr]^{\frac{1}{r}}\\
  & \leq 2 \sum_{p=1}^n \E\bigl[\lvert
  [
  U_p^N + L_p^N + R_p^N
  ]
  (\widebar{Q}_{p,n}^N(\testFun_n)) \rvert^r\bigr]^{\frac{1}{r}} r_{p,n} \beta(P_{p,n})\\
  & \leq 2 \sum_{p=1}^n \Bigl(2 \boundExponent{r} \boundResolvent{p} + \frac{3 \boundResolvent{p}}{\sqrt{N}}\Bigr) r_{p,n} \beta(P_{p,n}) \leq a_n b_r
 \end{align}
with $a_n \coloneqq 10 \sum_{p=1}^n \boundResolvent{p} r_{p,n} \beta(P_{p,n})$, where the last line follows from Minkowski's inequality combined with Lemma~\ref{lem:lr_inequality_for_martingale_u},  \eqref{eq:bound_on_first_remainder_measure} and \eqref{eq:bound_on_second_remainder_measure}. Since $\testFun \in \smash{\boundedFunSet^\star(\spacePath_n)}$, we can replace $\beta(P_{p,n})$ by $\beta^\star(P_{p,n})$ in the derivation above. Lemma~\ref{lem:time-uniform_bounds} then yields the time-uniform bound
\begin{align}
 a_n \leq 10 \boundResolventUniform \bar{r} \sum_{p=1}^n \bar{\beta}^{\lfloor (n -p)/m\rfloor}
 \leq 10 \boundResolventUniform \bar{r} m \sum_{n=0}^\infty \bar{\beta}^n \leq \frac{20 \bar{\imath} m }{\varepsilon(K) \varepsilon(M)^3 \varepsilon(G)^{2(m+l)}} \eqqcolon a.
\end{align}
This completes the proof. \hfill \ensuremath{_\Box}

\end{proof}

\subsection{Auxiliary results needed for the CLT}

\begin{lemma}\label{lem:convergence_of_covFun}
 Fix $n > 1$. For some $\mu \in \probMeasSet(\spacePath_{n-1})$ let $(\mu^N)_{N \geq 1}$ be a sequence of random probability measures on $(\spacePath_{n-1}, \sigFieldPath{n-1})$ such that $\mu^N(\testFun_{n-1}) \convergesAlmostSurely \mu(\testFun_{n-1})$ for all $\testFun_{n-1} \in \boundedFunSet(\spacePath_{n-1})$.

 Then under Assumptions~\ref{as:ergodicity} and \ref{as:lipschitz}, for all $(\testFun_n, \testFunAlt_n) \in \boundedFunSet(\spacePath_n)^2$,
 \begin{align}
  \lVert \covarianceFunction{n}{\mu^{\mathrlap{N}\,\,}}(\testFun_n, \testFunAlt_n) - \covarianceFunction{n}{\mu}(\testFun_n, \testFunAlt_n)\rVert
  \convergesAlmostSurely 0.
 \end{align}
\end{lemma}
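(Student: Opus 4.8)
The plan is to rewrite the covariance function in a more tractable form and then deduce the claim from the continuity, in $\mu$, of the kernels $\mcmcKern{n}{\mu}$ and of the resolvents $\resolv{n}{\mu}$ when acting on a fixed bounded function. Expanding the product in \eqref{eq:definition_of_covFun} gives the elementary identity
\[
 \covarianceFunction{n}{\mu}(\testFun_n,\testFunAlt_n)
 = \mcmcKern{n}{\mu}\bigl[\resolv{n}{\mu}(\testFun_n)\,\resolv{n}{\mu}(\testFunAlt_n)\bigr]
 - \bigl[\mcmcKern{n}{\mu}\resolv{n}{\mu}(\testFun_n)\bigr]\bigl[\mcmcKern{n}{\mu}\resolv{n}{\mu}(\testFunAlt_n)\bigr].
\]
Throughout I use the uniform bound $\sup_{\mu}\lVert\resolv{n}{\mu}\rVert\leq\boundResolvent{n}$ from \eqref{eq:bound_on_resolvent} (valid in particular for the random measures $\mu^N$) and the fact that a Markov kernel is a contraction on $\boundedFunSet(\spacePath_n)$, so that all the factors above are bounded by $\boundResolvent{n}$ and the whole problem reduces to the two convergence statements below.

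\textbf{Step one (kernel continuity).} For every fixed $\testFunAltAlt\in\boundedFunSet(\spacePath_n)$, I claim $\lVert[\mcmcKern{n}{\mu^N}-\mcmcKern{n}{\mu}](\testFunAltAlt)\rVert\convergesAlmostSurely0$. Applying \eqref{eq:lipschitz:1} with $(\mu,\mu^N)$ in place of $(\mu,\nu)$ --- so that the integral operator $\integralOperatorLipschitz{n}{\mu}$ appearing on the right is deterministic --- yields
\[
 \lVert[\mcmcKern{n}{\mu^N}-\mcmcKern{n}{\mu}](\testFunAltAlt)\rVert
 \leq \int_{\boundedFunSet(\spacePath_{n-1})}\bigl\lvert[\mu-\mu^N](\testFunAlt)\bigr\rvert\,\integralOperatorLipschitz{n}{\mu}(\testFunAltAlt,\diff\testFunAlt).
\]
By \eqref{eq:lipschitz:2} the map $\testFunAlt\mapsto\lVert\testFunAlt\rVert$ is integrable against the finite measure $\integralOperatorLipschitz{n}{\mu}(\testFunAltAlt,\ccdot)$, and since $\mu,\mu^N$ are probability measures $\lvert[\mu-\mu^N](\testFunAlt)\rvert\leq2\lVert\testFunAlt\rVert$ provides an $N$-independent integrable majorant. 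The hypothesis gives, for each fixed $\testFunAlt$, an almost-sure event on which $[\mu^N-\mu](\testFunAlt)\to0$; a Fubini argument (using joint measurability of $(\omega,\testFunAlt)\mapsto\mu^N(\omega)(\testFunAlt)$) upgrades this to a single almost-sure event on which $[\mu^N-\mu](\testFunAlt)\to0$ for $\integralOperatorLipschitz{n}{\mu}(\testFunAltAlt,\ccdot)$-almost every $\testFunAlt$, whence dominated convergence yields the claim. The identical argument applied to the defining ratio $\mcmcTarget{n}{\mu}(\testFunAltAlt)$ (for $n>1$ a quotient of $\mu$-integrals of fixed bounded functions with denominator $\mu(\Potential_{n-1})>0$) gives $\mcmcTarget{n}{\mu^N}(\testFunAltAlt)\convergesAlmostSurely\mcmcTarget{n}{\mu}(\testFunAltAlt)$.

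\textbf{Step two (resolvent continuity and assembly).} By definition of the resolvent together with $\mcmcTarget{n}{\mu}$-invariance of $\mcmcKern{n}{\mu}$, $\resolv{n}{\mu}(\testFunAltAlt)=\sum_{j\geq0}(\mcmcKern{n}{\mu})^{j}\bigl(\testFunAltAlt-\mcmcTarget{n}{\mu}(\testFunAltAlt)\unitFun\bigr)$, and by \eqref{eq:as:ergodicity} with submultiplicativity of the Dobrushin coefficient the $j$th summand has supremum norm at most $\osc(\testFunAltAlt)\,(1-\varepsilon_n(K))^{\lfloor j/i_n\rfloor}$, a bound uniform in $\mu$; hence the series converges uniformly in $\mu$ and it suffices to prove almost-sure convergence of each partial sum. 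This follows by induction on $j$ from
\[
 (\mcmcKern{n}{\mu^N})^{j}(\testFunAltAlt)-(\mcmcKern{n}{\mu})^{j}(\testFunAltAlt)
 = \mcmcKern{n}{\mu^N}\bigl[(\mcmcKern{n}{\mu^N})^{j-1}(\testFunAltAlt)-(\mcmcKern{n}{\mu})^{j-1}(\testFunAltAlt)\bigr]
 + [\mcmcKern{n}{\mu^N}-\mcmcKern{n}{\mu}]\bigl((\mcmcKern{n}{\mu})^{j-1}(\testFunAltAlt)\bigr),
\]
whose first term is controlled by the induction hypothesis and the contraction property, and whose second term vanishes almost surely by step one, combined with the convergence of the scalars $\mcmcTarget{n}{\mu^N}(\testFunAltAlt)$; thus $\lVert\resolv{n}{\mu^N}(\testFunAltAlt)-\resolv{n}{\mu}(\testFunAltAlt)\rVert\convergesAlmostSurely0$ for each fixed $\testFunAltAlt$. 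Finally, abbreviating $a=\resolv{n}{\mu}(\testFun_n)$, $a_N=\resolv{n}{\mu^N}(\testFun_n)$ and $b,b_N$ likewise for $\testFunAlt_n$, steps one and two give, almost surely, $\lVert a_Nb_N-ab\rVert\to0$, $\lVert\mcmcKern{n}{\mu^N}(a_N)-\mcmcKern{n}{\mu}(a)\rVert\to0$, $\lVert\mcmcKern{n}{\mu^N}(b_N)-\mcmcKern{n}{\mu}(b)\rVert\to0$ and $\lVert[\mcmcKern{n}{\mu^N}-\mcmcKern{n}{\mu}](ab)\rVert\to0$ (the last since $ab\in\boundedFunSet(\spacePath_n)$); substituting into the displayed identity for $\covarianceFunction{n}{\mu}$ and splitting each difference term-by-term, absorbing the remaining bounded factors via the $\boundResolvent{n}$-bounds and the contraction property, yields $\lVert\covarianceFunction{n}{\mu^N}(\testFun_n,\testFunAlt_n)-\covarianceFunction{n}{\mu}(\testFun_n,\testFunAlt_n)\rVert\convergesAlmostSurely0$.

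I expect the main obstacle to be step one: Assumption~\ref{as:lipschitz} supplies exactly the Lipschitz-type estimate needed, but passing from the ``for each $\testFunAlt$, almost surely'' convergence furnished by the hypothesis to the ``almost surely, for $\integralOperatorLipschitz{n}{\mu}(\testFunAltAlt,\ccdot)$-a.e.\ $\testFunAlt$'' convergence required to invoke dominated convergence on the infinite-dimensional space $\boundedFunSet(\spacePath_{n-1})$ needs the measurability/Fubini step. Once this is settled and the uniform-in-$\mu$ ergodic bounds of Assumption~\ref{as:ergodicity} are in hand, the remaining steps are routine.
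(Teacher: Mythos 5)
Your proof is correct, but it is organised differently from the paper's. The paper does not prove continuity of $\mcmcKern{n}{\mu}$, $\mcmcTarget{n}{\mu}$ and $\resolv{n}{\mu}$ in $\mu$ separately: instead it invokes the telescoping-sum construction of \citet[Proposition~3.1, Theorem~3.5]{bercu2012fluctuations} to obtain, directly at the level of the covariance function, a single Lipschitz-type bound $\lVert \covarianceFunction{n}{\mu^{N}}(\testFun_n,\testFunAlt_n) - \covarianceFunction{n}{\mu}(\testFun_n,\testFunAlt_n)\rVert \leq \int_{\boundedFunSet(\spacePath_{n-1})} \lvert[\mu^N-\mu](h)\rvert\, \integralOperatorLipschitzAlt{n}{\mu}((\testFun_n,\testFunAlt_n),\diff h)$ with an integral operator satisfying a bound analogous to \eqref{eq:lipschitz:2}, and then applies dominated convergence once. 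You instead use Assumption~\ref{as:lipschitz} only at the level of the kernel (your step one), and propagate almost-sure convergence through the resolvent by hand, via the series representation $\resolv{n}{\mu}(\testFunAltAlt)=\sum_{j\geq0}(\mcmcKern{n}{\mu})^{j}(\testFunAltAlt-\mcmcTarget{n}{\mu}(\testFunAltAlt))$ whose tails are controlled uniformly in $\mu$ by Assumption~\ref{as:ergodicity}, an induction on powers of the kernel, and the product expansion of $\covarianceFunction{n}{\mu}$; all the estimates you use (the uniform resolvent bound \eqref{eq:bound_on_resolvent}, the contraction property, the Dobrushin submultiplicativity argument behind \eqref{eq:as:ergodicity}) are valid, and the quantifier-exchange (Fubini plus dominated convergence) step you flag is handled correctly and is in fact the same subtlety that is implicit in the paper's single dominated-convergence step. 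What the paper's route buys is brevity, by outsourcing the propagation of the Lipschitz structure from $\mcmcKern{n}{\mu}$ to $\covarianceFunction{n}{\mu}$ to the cited reference; what your route buys is a self-contained and more elementary argument that never needs Lipschitz-type integral-operator bounds for the resolvent or the covariance function themselves, only the convergence statements, at the cost of the longer series/induction bookkeeping.
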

\begin{proof}
 We use a similar argument to that used in the first part of the proof of \citet[Theorem~3.5]{bercu2012fluctuations}. That is, under Assumptions~\ref{as:ergodicity} and \ref{as:lipschitz}, and using \citet[Proposition~3.1]{bercu2012fluctuations}, a telescoping-sum decomposition allows us to find a constant $\smash{\boundIntegralOperatorLipschitzAlt{n} < \infty}$ and a family of bounded integral operator $(\integralOperatorLipschitzAlt{n}{\nu})_{\nu \in \probMeasSet(\spacePath_{n-1})}$ from $\boundedFunSet(\spacePath_{n-1})$ into $\boundedFunSet(\spacePath_{n})^2$ satisfying
 \begin{align}
  \sup_{\nu \in \probMeasSet(\spacePath_{n-1})}\int_{\boundedFunSet(\spacePath_{n-1})} \lVert h\rVert \integralOperatorLipschitzAlt{n}{\nu}((\testFun_n, \testFunAlt_n), \diff h) \leq \lVert \testFun_n\rVert \lVert \testFunAlt_n\rVert \boundIntegralOperatorLipschitzAlt{n}, \label{eq:bound_on_covFunDiff_1}
 \end{align}
 such that
 \begin{align}
  \lVert \covarianceFunction{n}{\mu^{\mathrlap{N}\,\,}}(\testFun_n, \testFunAlt_n) - \covarianceFunction{n}{\mu}(\testFun_n, \testFunAlt_n)\rVert
  & \leq \int_{\boundedFunSet(\spacePath_{n-1})} \lvert [\mu^N - \mu](h)\rvert \integralOperatorLipschitzAlt{n}{\mu}((\testFun_n,\testFunAlt_n), \diff h). \label{eq:bound_on_covFunDiff_2}
 \end{align}

 It remains to be shown that the \RHS{} in \eqref{eq:bound_on_covFunDiff_2} goes to zero almost surely. Let $\mathcal{D}$ denote the collection of bounded Borel\slash Borel-measurable functions from $\boundedFunSet(\spacePath_{n-1})$ to $\reals$ (with respect to the uniform and Euclidean norms, respectively). This set contains, among others, the mappings $h \mapsto \nu(h)$ induced by probability measures $\nu \in \probMeasSet(\spacePath_{n-1})$ via their action as linear integral operators. By Borel measurability of the norm, $\mathcal{D}$ also contains the function $h \mapsto \lVert h \rVert$. Since $\mu^N$ is a probability measure, we have $\lvert \mu^N(h) \rvert \leq \lVert h\rVert$, for any $h \in \boundedFunSet(\spacePath_{n-1})$, while \eqref{eq:bound_on_covFunDiff_1} ensures that $h \mapsto \lVert h \rVert$ is integrable. Hence, we can apply Lebesgue's dominated convergence theorem \citep[\EG][Theorem 1.21]{kallenberg2006foundations} to conclude that
 the \RHS{} of \eqref{eq:bound_on_covFunDiff_2} vanishes almost-surely as $N \to \infty$. \hfill \ensuremath{_\Box}
\end{proof}
%
%
%
%
%

We now prove the following proposition which adapts \citet[Proposition~4.3]{bercu2012fluctuations} \citep[see also][Theorem~9.3.1]{delmoral2004feynman} to our setting.

\begin{proposition} \label{prop:convergence_of_martingale_M}
 Let $\testFun \coloneqq (\testFun_n)_{n \geq 1}$, where $\testFun_n = (\testFun_n^u)_{1 \leq u \leq d} \in \boundedFunSet(\spacePath_n)^d$. The sequence of martingales $\martingale^N(\testFun) = (\martingale_n^N(\testFun))_{n \geq 1}$ converges in law as $N \to \infty$ to a Gaussian martingale $\martingale(\testFun) = (\martingale_n(\testFun))_{n \geq 1}$ such that for any $n \geq 1$ and any $1 \leq u,v\leq d$,
 \begin{align}
  \langle \martingale(\testFun^u), \martingale(\testFun^v)\rangle_n
  & = \sum_{p=1}^n \Target_p \covarianceFunction{p}{}(\testFun_p^u, \testFun_p^v).
 \end{align}
\end{proposition}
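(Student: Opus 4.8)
The plan is to prove this via the Cram\'er--Wold device together with a central limit theorem for triangular arrays of martingale differences, in the spirit of the proofs of \citet[Proposition~4.3]{bercu2012fluctuations} and \citet[Theorem~9.3.1]{delmoral2004feynman}. Fix a finitely supported real array $(\lambda_{m,u})_{m \geq 1,\, 1 \leq u \leq d}$. Because each resolvent $\resolv{p}{N}$ is a linear operator, the increment $\martingaleDiffSingle_p^{N,i+1}(\ccdot)$ is linear in its test-function argument, so $\sum_{m}\sum_{u}\lambda_{m,u}\martingale_m^N(\testFun^u) = \sum_{p \geq 1}\sum_{i=1}^N \martingaleDiffSingle_p^{N,i+1}(\psi_p)$, where $\psi_p \coloneqq \sum_{u}\bigl(\sum_{m \geq p}\lambda_{m,u}\bigr)\testFun_p^u \in \boundedFunSet(\spacePath_p)$ and the outer sum over $p$ is finite. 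Ordering the double index $(p,i)$ lexicographically, this is the terminal value of a scalar martingale array with respect to the filtration $(\calF_p^{N,i})$, so it suffices to establish a scalar martingale CLT for it and to identify the limiting variance. Joint convergence of $(\martingale_m^N(\testFun))_{m \geq 1}$ to a Gaussian process then follows by Cram\'er--Wold, and since the resulting limiting covariance structure $\sum_{p = 1}^{n \wedge n'}\Target_p\covarianceFunction{p}{}(\testFun_p^u,\testFun_p^v)$ (obtained by expanding the limiting variance of the linear combination using bilinearity of $\covarianceFunction{p}{}$ and polarisation) is exactly that of a process with independent increments, the limit is a Gaussian martingale with $\langle \martingale(\testFun^u), \martingale(\testFun^v)\rangle_n = \sum_{p=1}^n \Target_p\covarianceFunction{p}{}(\testFun_p^u,\testFun_p^v)$.

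The two hypotheses to verify are convergence in probability of the predictable quadratic variation and a conditional Lyapunov condition. By \eqref{eq:martingale_expectation_time_p}--\eqref{eq:martingale_covariance_time_p}, the predictable quadratic variation of the array above is $\sum_{p}\tilde{\Target}_p^N\covarianceFunction{p}{N}(\psi_p,\psi_p)$, so I would prove that $\tilde{\Target}_p^N\covarianceFunction{p}{N}(\testFun_p,\testFunAlt_p) \convergesAlmostSurely \Target_p\covarianceFunction{p}{}(\testFun_p,\testFunAlt_p)$ for each $p$ and all $\testFun_p,\testFunAlt_p \in \boundedFunSet(\spacePath_p)$. Splitting $\tilde{\Target}_p^N\covarianceFunction{p}{N}(\testFun_p,\testFunAlt_p) - \Target_p\covarianceFunction{p}{}(\testFun_p,\testFunAlt_p) = \tilde{\Target}_p^N\bigl(\covarianceFunction{p}{N}(\testFun_p,\testFunAlt_p) - \covarianceFunction{p}{}(\testFun_p,\testFunAlt_p)\bigr) + [\tilde{\Target}_p^N - \Target_p]\bigl(\covarianceFunction{p}{}(\testFun_p,\testFunAlt_p)\bigr)$, the first term is bounded in absolute value by $\lVert \covarianceFunction{p}{N}(\testFun_p,\testFunAlt_p) - \covarianceFunction{p}{}(\testFun_p,\testFunAlt_p)\rVert$, which vanishes almost surely by Lemma~\ref{lem:convergence_of_covFun} applied with $\mu^N = \Target_{p-1}^N$ and $\mu = \Target_{p-1}$ (admissible since $\Target_{p-1}^N \convergesAlmostSurely \Target_{p-1}$ by Proposition~\ref{prop:slln}); the second term vanishes almost surely because $\covarianceFunction{p}{}(\testFun_p,\testFunAlt_p)$ is a fixed element of $\boundedFunSet(\spacePath_p)$ and $\tilde{\Target}_p^N \convergesAlmostSurely \Target_p$, the latter following from $\Target_p^N \convergesAlmostSurely \Target_p$ (Proposition~\ref{prop:slln}) and $\lvert[\Target_p^N - \tilde{\Target}_p^N](\testFun_p)\rvert = \lvert R_p^N(\testFun_p)\rvert/\sqrt{N} \convergesAlmostSurely 0$, which is a consequence of \eqref{eq:bound_on_second_remainder_measure} and the Borel--Cantelli lemma.

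For the Lyapunov condition, the bound $\lVert \resolv{p}{N}\rVert \leq \boundResolvent{p}$ from \eqref{eq:bound_on_resolvent} together with $\lVert \mcmcKern{p}{N}\rVert \leq 1$ gives $\lvert \martingaleDiffSingle_p^{N,i+1}(\psi_p)\rvert \leq 2\boundResolvent{p}\lVert \psi_p\rVert/\sqrt{N}$ for every $i$; hence, for any $\delta > 0$ and any fixed $m$, $\sum_{p \leq m}\sum_{i=1}^N \E\bigl[\lvert \martingaleDiffSingle_p^{N,i+1}(\psi_p)\rvert^{2+\delta}\,\big|\,\calF_p^{N,i}\bigr]$ is dominated by $N^{-\delta/2}\sum_{p \leq m}\bigl(2\boundResolvent{p}\lVert\psi_p\rVert\bigr)^{2+\delta} \to 0$, which in particular kills the conditional Lindeberg sum. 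Combining the two verified conditions with the martingale CLT and then Cram\'er--Wold completes the proof.

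The main obstacle is the first condition: the integrand $\covarianceFunction{p}{N}$ is itself random, being built from the empirical measure $\Target_{p-1}^N$, so one cannot simply invoke a law of large numbers against a fixed function; this is precisely where the Lipschitz-type Assumption~\ref{as:lipschitz}, via the dominated-convergence argument packaged in Lemma~\ref{lem:convergence_of_covFun}, is essential. The remaining ingredient, a strong law for the auxiliary stationary particle system $\tilde{\Target}_p^N$, is routine given the strong law already available for $\Target_p^N$ and the almost-sure negligibility of $R_p^N/\sqrt{N}$.
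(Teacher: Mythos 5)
Your proof is correct and follows essentially the same route as the paper: both re-index the double array of increments $\martingaleDiffSingle_p^{N,i+1}$ into a single triangular martingale-difference array, apply a martingale CLT (with the Lindeberg\slash Lyapunov condition handled via boundedness and the resolvent bound), and identify the limiting bracket through the almost-sure convergence $\tilde{\Target}_p^N\covarianceFunction{p}{N}(\testFun_p^u,\testFun_p^v) \convergesAlmostSurely \Target_p\covarianceFunction{p}{}(\testFun_p^u,\testFun_p^v)$ obtained from Lemma~\ref{lem:convergence_of_covFun} together with the strong law. Your Cram\'er--Wold scalarisation is merely a packaging difference, and your explicit treatment of the stationary auxiliary system via $\lvert[\Target_p^N - \tilde{\Target}_p^N](\testFun_p)\rvert = \lvert R_p^N(\testFun_p)\rvert/\sqrt{N} \convergesAlmostSurely 0$ is, if anything, slightly more careful than the paper's corresponding display.
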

\begin{proof}
We begin by re-indexing the processes defined above. For any $(p,i) \in \naturals \times \{1,\dotsc,N\}$ with $1 \leq p$ and $1 \leq i \leq N$, define the bijection $\bijection^N$ by
\begin{align}
  \bijection^N(p,i) \coloneqq (p-1)N + i - 1,
\end{align}
Define the filtration $\calG^N \coloneqq (\calG_k^N)_{k \geq 1}$, where $\calG_k^N \coloneqq \vee_{(p,i)\colon \theta^N(p,i) \leq k} \calF_{p}^{N,i}$.
We then have $\martingale_n^N(\testFun) = \martingaleAlt_k^N(\testFun)$ whenever $k = \bijection^N(n,N)$, where
\begin{align}
  \martingaleAlt_k^N(\testFun) \coloneqq \sum_{j=1}^k \varDelta \martingaleAlt_j^N(\testFun),
\end{align}
defines an $\calG^N$-martingale $(\widetilde{\martingale}_k^N(\testFun))_{k \geq 1}$ with increments
\begin{align}
  \varDelta \martingaleAlt_j^N(\testFun)
  \coloneqq \martingaleDiffSingle_p^{N,i}(\testFun_p) \quad \text{for $\bijection^N(p,i) = j$.}
\end{align}
Indeed, by \eqref{eq:martingale_expectation_time_p} and \eqref{eq:martingale_covariance_time_p}, for any $1 \leq u,v\leq d$,
\begin{align}
  \E\bigl[\varDelta \martingaleAlt_j^N(\testFun^u)\big|\calG_{j-1}^N\bigr] & = 0, \label{eq:martingale_expectation_time_j}\\
  \E\bigl[\varDelta \martingaleAlt_j^N(\testFun^u)\varDelta \martingaleAlt_j^N(\testFun^v)\big|\calG_{j-1}^N\bigr] & = \frac{1}{N}\covarianceFunction{p}{N}(\testFun_p^u, \testFun_p^v)(\ParticlePathStationary{p}{i}), \label{eq:martingale_covariance_time_j}
\end{align}
where $p \geq 1$ and $1 \leq i \leq N$ satisfy $\bijection^N(p,i) = j$.

We now apply the \gls{CLT} for triangular arrays of martingale-difference sequences \citep[see, \EG,][Section~VII.8, Theorem~4; p.~543]{probability:theory:Shi95}. The Lindeberg condition is satisfied because the test functions $\testFun_n$ are bounded. Finally, for any $p \geq 1$,
\begin{align}
  \smashoperator{\sum_{\smash{k=(pN)+1}}^{(p+1)N}} \;\E[\varDelta \martingaleAlt_k^N(\testFun^u)\varDelta \martingaleAlt_k^N(\testFun^v)|\calG_{k-1}^N]
  & = \frac{1}{N} \smashoperator{\sum_{i=1}^{\smash{N}}} \covarianceFunction{p}{N}(\testFun_p^u, \testFun_p^v)(\ParticlePathStationary{p}{i})\\
  & = \tilde{\Target}_p^N \covarianceFunction{p}{N}(\testFun_p^u, \testFun_p^v)\\
  & \convergesAlmostSurely \Target_p \covarianceFunction{p}{}(\testFun_p^u, \testFun_p^v),
\end{align}
by Proposition~\ref{prop:slln} and Lemma~\ref{lem:convergence_of_covFun}, writing $\covarianceFunctionAlt{p}{N} \coloneqq \covarianceFunction{p}{N}(\testFun_p^u, \testFun_p^v)$ and $\covarianceFunctionAlt{p}{} \coloneqq \covarianceFunction{p}{}(\testFun_p^u, \testFun_p^v)$ to simplify the notation,
\begin{align}
 \MoveEqLeft\Prob\bigl(\bigl\{\lim\nolimits_{N \to \infty} \lvert \Target_p^N \covarianceFunctionAlt{p}{N} - \Target_p \covarianceFunctionAlt{p}{} \rvert = 0 \bigr\}\bigr)\\
 & \geq \Prob\bigl(\bigl\{\lim\nolimits_{N \to \infty} \lvert \Target_p^N(\covarianceFunctionAlt{p}{N} - \covarianceFunctionAlt{p}{}) \rvert + \lvert [\Target_p^N - \Target_p](\covarianceFunctionAlt{p}{})\rvert = 0 \bigr\}\bigr)\\
 & \geq \Prob\bigl(\bigl\{\lim\nolimits_{N \to \infty} \lVert \covarianceFunctionAlt{p}{N} - \covarianceFunctionAlt{p}{} \rVert + \lvert [\Target_p^N - \Target_p](\covarianceFunctionAlt{p}{})\rvert = 0 \bigr\}\bigr)\\
 & = 1.
\end{align}
As a result, for any $n \geq 1$, as $N \to \infty$,
\begin{equation}
 \smashoperator{\sum_{k=1}^{(n+1)N}} \E[\varDelta \martingaleAlt_k^N(\testFun^u)\varDelta \martingaleAlt_k^N(\testFun^v)|\calG_{k-1}^N]
 \convergesAlmostSurely \sum_{p=1}^n \Target_p \covarianceFunction{p}{}(\testFun_p^u, \testFun_p^v).
\end{equation}
This completes the proof. \hfill \ensuremath{_\Box}
\end{proof}

As an immediate consequence of Proposition~\ref{prop:convergence_of_martingale_M}, we obtain the following corollary. Its proof is a straightforward modification of the proof of \citet[Corollary~9.3.1]{delmoral2004feynman}.

\begin{corollary} \label{cor:convergence_of_local_errors}
 As $N \to \infty$, the sequence of random fields $\localErrorSingle^N = (\localErrorSingle_n^N)_{n\geq 1}$ converges in law (and in the sense of convergence of finite-dimensional marginals) to the sequence of independent and centred Gaussian random fields $\localErrorSingle = (\localErrorSingle_n)_{n \geq 1}$ with covariance function as defined in \eqref{eq:local_error_covariance_function}.  \hfill \ensuremath{_\Box}
\end{corollary}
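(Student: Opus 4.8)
The plan is to deduce the corollary directly from Proposition~\ref{prop:convergence_of_martingale_M}, following the argument used to establish \citet[Corollary~9.3.1]{delmoral2004feynman}. Fix $n \geq 1$, $d \geq 1$ and a test function $\testFun = (\testFun_p)_{1 \leq p \leq n}$ with $\testFun_p = (\testFun_p^u)_{1 \leq u \leq d} \in \boundedFunSet(\spacePath_p)^d$. Since finite-dimensional marginals of the field $\localErrorSingle^N = (\localErrorSingle_n^N)_{n \geq 1}$ are spanned by $\reals^{nd}$-valued vectors of the form $(\localErrorSingle_p^N(\testFun_p))_{1 \leq p \leq n}$ and $n, d, \testFun$ are arbitrary, it suffices to show that this vector converges in law to $(\localErrorSingle_p(\testFun_p))_{1 \leq p \leq n}$, with the $\localErrorSingle_p$ independent across $p$ and having the covariance in \eqref{eq:local_error_covariance_function}.

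First, I would identify the increments of the limiting Gaussian martingale. By Proposition~\ref{prop:convergence_of_martingale_M}, $\martingale^N(\testFun) = (\martingale_n^N(\testFun))_{n \geq 1}$ converges in law to a Gaussian martingale $\martingale(\testFun)$ with $\langle \martingale(\testFun^u), \martingale(\testFun^v)\rangle_p = \sum_{q=1}^p \Target_q \covarianceFunction{q}{}(\testFun_q^u, \testFun_q^v)$. A vector-valued Gaussian martingale has jointly Gaussian, uncorrelated --- hence independent --- increments over disjoint index intervals, so setting $\martingale_0(\testFun) \coloneqq 0$ the increments $\martingale_p(\testFun) - \martingale_{p-1}(\testFun)$ are independent in $p$, centred and Gaussian, with $\E[(\martingale_p(\testFun^u) - \martingale_{p-1}(\testFun^u))(\martingale_p(\testFun^v) - \martingale_{p-1}(\testFun^v))] = \langle \martingale(\testFun^u), \martingale(\testFun^v)\rangle_p - \langle \martingale(\testFun^u), \martingale(\testFun^v)\rangle_{p-1} = \Target_p \covarianceFunction{p}{}(\testFun_p^u, \testFun_p^v)$; that is, they have exactly the joint law of $(\localErrorSingle_p(\testFun_p))_{1 \leq p \leq n}$ as specified in \eqref{eq:local_error_covariance_function}. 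Since $\martingaleSingle_p^N(\testFun_p) = \martingale_p^N(\testFun) - \martingale_{p-1}^N(\testFun)$ and the differencing map is linear, hence continuous, on the first $n$ coordinates, the continuous mapping theorem gives $(\martingaleSingle_p^N(\testFun_p))_{1 \leq p \leq n} \convergesInDistribution (\localErrorSingle_p(\testFun_p))_{1 \leq p \leq n}$.

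Second, I would remove the remainder terms. By \eqref{eq:local_error_decomposition_full}, $\localErrorSingle_p^N(\testFun_p) = \martingaleSingle_p^N(\testFun_p) + \remainderSingle_p^N(\testFun_p) + R_p^N(\testFun_p)$. Bound \eqref{eq:bound_on_first_remainder_measure} gives $\lvert \remainderSingle_p^N(\testFun_p^u)\rvert \leq 2\boundResolvent{p}\lVert\testFun_p^u\rVert/\sqrt{N} \to 0$ deterministically, while \eqref{eq:bound_on_second_remainder_measure} together with Markov's inequality and the Borel--Cantelli lemma gives $R_p^N(\testFun_p^u) \convergesAlmostSurely 0$; in particular both remainder vectors converge to zero in probability. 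Slutzky's lemma, applied to the $\reals^{nd}$-valued vectors, then yields $(\localErrorSingle_p^N(\testFun_p))_{1 \leq p \leq n} \convergesInDistribution (\localErrorSingle_p(\testFun_p))_{1 \leq p \leq n}$. As the limiting covariance is block-diagonal in the time index, this is precisely the claimed convergence in law and in the finite-dimensional sense, with the $\localErrorSingle_p$ independent across $p$.

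The only delicate point is the identification in the first step: Proposition~\ref{prop:convergence_of_martingale_M} characterises the weak limit only through its predictable bracket, and one must use that this limit is genuinely a Gaussian martingale --- so that its increments are independent and their covariance is the bracket increment, matching \eqref{eq:local_error_covariance_function} --- which is exactly what that proposition asserts. Everything else is a routine application of the continuous mapping and Slutzky lemmas. No time-uniformity is needed, so Assumptions~\ref{as:stability_mcmc_kernels}--\ref{as:stability_potential} play no role; only Assumptions~\ref{as:ergodicity} and \ref{as:lipschitz}, already invoked in Proposition~\ref{prop:convergence_of_martingale_M}, are used.
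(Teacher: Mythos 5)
Your proposal is correct and follows essentially the same route the paper intends: the paper's proof is a one-line reference to the argument of \citet[Corollary~9.3.1]{delmoral2004feynman}, which is precisely what you have written out --- identify the local errors with the increments of the martingale from Proposition~\ref{prop:convergence_of_martingale_M}, note that the limiting Gaussian martingale has independent increments whose covariances are the bracket increments $\Target_p \covarianceFunction{p}{}(\testFun_p^u,\testFun_p^v)$, and dispose of the remainders $\remainderSingle_p^N$ and $R_p^N$ via \eqref{eq:bound_on_first_remainder_measure}, \eqref{eq:bound_on_second_remainder_measure} and Slutzky's lemma. Nothing is missing.
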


We are now ready to prove the \gls{CLT}.

\begin{proof}[of Proposition~\ref{prop:clt}]
 The proof of the \gls{CLT} now follows by replacing \citet[Theorem~9.3.1 \& Corollary~9.3.1]{delmoral2004feynman} in the proofs of \citet[Propositions~9.4.1 \& 9.4.2]{delmoral2004feynman} with Proposition~\ref{prop:convergence_of_martingale_M} and Corollary~\ref{cor:convergence_of_local_errors}, respectively.

 For the time-uniform bound on the asymptotic variance in \eqref{enum:prop:clt:normalised}, we note that for any $1 \leq u \leq d$,
 \begin{align}
  \lVert  \widebar{Q}_{p,n}(\testFun_n^u - \Target_n(\testFun_n^u)) \rVert
  & = \biggl\lVert \frac{Q_{p,n}(\unitFun)}{\Target_p Q_{p,n}(\unitFun)} \bigl[P_{p,n}(\testFun_n^u) - \Psi_{p,n}^{\Target_p}P_{p,n}(\testFun_n^u)\bigr]\biggr\rVert\\
  & \leq r_{p,n} \lVert [\Id- \Psi_{p,n}^{\Target_p}] P_{p,n}(\testFun_n^u)\rVert\\
  & \leq 2 \lVert \testFun_n^u \rVert r_{p,n} \beta^\star(P_{p,n}),
 \end{align}
 where we have used that $\Target_n = \Psi_{p,n}^{\Target_p}P_{p,n}$, where
 \begin{align}
  \Psi_{p,n}^{\Target_p}(\diff \path_p) \coloneqq \frac{\Target_p(\diff \path_p)Q_{p,n}(\unitFun)(\path_p)}{\Target_p Q_{p,n}(\unitFun)}.
 \end{align}
 Hence, by \eqref{eq:boundedness_of_covFun} and Lemma~\ref{lem:time-uniform_bounds} the time-uniform bound on the asymptotic variance in \eqref{eq:bound_on_asymptotic_variance} holds, \EG{} with
\begin{align}
 c \coloneqq 8 \boundResolventUniform^2 \bar{r}^2 \sum_{p=1}^n \bar{\beta}^{2 \lfloor (n -p)/m\rfloor}
 \leq 8 \boundResolventUniform^2 \bar{r}^2 m \sum_{n=0}^\infty \bar{\beta}^n
 \leq \frac{32 \bar{\imath}^2 m }{\varepsilon(K)^2 \varepsilon(M)^4 \varepsilon(G)^{3(m+l)}}.
\end{align}
This completes the proof. \hfill \ensuremath{_\Box}
\end{proof}

\end{document}